\documentclass[11pt]{article}

\usepackage[a4paper, left=1in, right=1in, top=1in, bottom=1in]{geometry}
\usepackage{cmap} % Load before fontenc 
\usepackage[utf8]{inputenc}
\usepackage[english]{babel}
\usepackage[T1]{fontenc}

\usepackage{amsfonts}
\usepackage{amsmath}
\usepackage{amssymb}
\usepackage{amstext}
\usepackage{amscd}
\usepackage{amsthm}
\usepackage{mathrsfs}
\usepackage{mathtools}
\usepackage{longdivision}
\usepackage{nicefrac}
\usepackage{polynom}
\usepackage{bbm}
\usepackage{complexity}
\usepackage{dsfont}
\usepackage{physics}
\usepackage{braket}

\usepackage{xcolor}
\definecolor{blueviolet}{rgb}{0.2, 0.2, 0.6}
\definecolor{webgreen}{rgb}{0,.5,0}
\definecolor{webbrown}{rgb}{.6,0,0}
\usepackage[
  bookmarks=false,
  colorlinks=true, %allcolors=blueviolet,
  urlcolor=webbrown,
  linkcolor=blueviolet, 
  citecolor=webgreen,
  pdfstartpage=1,
  pdfstartview={FitH},  % FitBH
  bookmarksopen=false
  ]{hyperref}
\usepackage{zref-clever} % automatically resolve references to include environment name
\zcsetup{cap, nameinlink}
\newcommand{\cref}[1]{\zcref{#1}} % aliases like cleveref
\newcommand{\Cref}[1]{\zcref[S]{#1}}

\usepackage{appendix}
\usepackage{url}
\usepackage{listings}
\usepackage{tikz}
\usepackage{float}
\usepackage{makeidx}
\usepackage[shortlabels]{enumitem}
\usepackage[linesnumbered,ruled,vlined]{algorithm2e}
\zcRefTypeSetup{algocf}{
Name-sg = Algorithm ,
name-sg = algorithm ,
Name-pl = Algorithms ,
name-pl = algorithms ,
}
\usepackage[normalem]{ulem}
\usepackage{booktabs}

\allowdisplaybreaks

\DeclareMathOperator*{\dtr}{d_{\mathrm{tr}}}

\newcommand{\KeyGen}{\mathsf{KeyGen}}
\newcommand{\StateGen}{\mathsf{StateGen}}

\newcommand{\Ver}{\mathsf{Ver}}

\newcommand{\negl}{\mathsf{negl}}

\newcommand{\EFI}{\mathsf{EFI}}

\numberwithin{equation}{section}

\newcommand{\nocontentsline}[3]{}
\let\origcontentsline\addcontentsline
\newcommand\stoptoc{\let\addcontentsline\nocontentsline}
\newcommand\resumetoc{\let\addcontentsline\origcontentsline}

\newtheorem{theorem}{Theorem}
\newtheorem{prop}{Proposition}

\newtheorem{corollary}{Corollary}
\newtheorem{definition}{Definition}

\zcRefTypeSetup{prop}{
Name-sg = Proposition ,
name-sg = proposition ,
Name-pl = Propositions ,
name-pl = propositions ,
}
\zcRefTypeSetup{claim}{
Name-sg = Claim ,
name-sg = claim ,
Name-pl = Claims ,
name-pl = claims ,
}
\zcRefTypeSetup{remark}{
Name-sg = Remark ,
name-sg = remark ,
Name-pl = Remarks ,
name-pl = remarks ,
}

\newcommand{\stkout}[1]{\ifmmode\text{\sout{\ensuremath{#1}}}\else\sout{#1}\fi}
\newif\ifverbose
\verbosetrue
\begin{document}

\title{Equivalence Between Average-Case Hardness of Learning and Cryptography for Mixed Quantum States}

\author{Alexandru Cojocaru\thanks{Authors are listed alphabetically.}\;\thanks{University of Edinburgh. Email: \href{mailto:a.cojocaru@ed.ac.uk}{a.cojocaru@ed.ac.uk}} \and Laura Lewis\footnotemark[1]\;\thanks{University of Edinburgh. UC Berkeley. Email: \href{mailto:lllewis@berkeley.edu}{lllewis@berkeley.edu}}}

\date{}

\maketitle

\begin{abstract}
The relationship between cryptography and learning theory has long been a central theme in the foundations of theoretical computer science: cryptographic primitives can imply hardness of learning, while hardness of learning can in turn be used to construct cryptographic schemes. Recent works have begun exploring analogous connections in the quantum setting, relating the average-case hardness of learning quantum states (AHL) to cryptographic primitives such as one-way state generators (OWSG).
Despite recent progress exploring this for pure states, the relationship for mixed states has remained an open question.

In this work, we prove that the existence of AHL for mixed quantum states is equivalent to the existence of inefficiently verifiable one-way state generators (IV-OWSGs).
As a consequence, this relates mixed-state AHL to EFI pairs.
Moreover, as a corollary of existing results, we obtain a separation between IV-OWSGs and OWSGs relative to the SWAP oracle.
\end{abstract}

\section{Introduction}

Classically, there is a fundamental connection between cryptography and learning theory, where they are often viewed as each other's antithesis~\cite{rivest1991cryptography}.
Namely, cryptographic constructions can imply the impossibility of efficient learning~\cite{kearns1994cryptographic,klivans2009cryptographic,song2021cryptographic,daniely2014average,valiant1984theory,hirahara2023learning,naor2006learning,angluin1988queries,naor2015bloom}, and the hardness of learning can be used for novel constructions of cryptographic primitives~\cite{impagliazzo1990no,oliveira2016conspiracies,blum1993cryptographic,regev2009lattices,gentry2008trapdoors,lyubashevsky2010ideal,micciancio2009lattice}.
For instance, early results show the intractability of PAC learning algorithms for classes of Boolean functions based on constructions of public-key cryptosystems~\cite{kearns1994cryptographic,klivans2009cryptographic}.
Meanwhile, a prominent example in the opposite direction is the Learning with Errors problem, which is widely believed to be computationally intractable to solve, even for quantum computers, leading to many (post-quantum) cryptographic schemes based on it~\cite{regev2009lattices,gentry2008trapdoors,lyubashevsky2010ideal} (see, e.g.,~\cite{micciancio2009lattice} for a more detailed review).
Several works in the classical world have also shown the equivalence between the non-existence of one-way functions and the existence of efficient (average-case) learning algorithms for various problems~\cite{impagliazzo1990no,naor2006learning,naor2015bloom,hirahara2023learning}.

In contrast to this bountiful literature in the classical world, results investigating the intersection of quantum learning and cryptography have been significantly more scarce.
Some works have established computational hardness for learning classes of quantum states/unitaries from the existence of pseudorandom quantum states/unitaries~\cite{zhao2024learning,yang2023complexity,foxman2025random}.
Recently, researchers have begun to explore the converse: designing quantum cryptography from the \emph{average-case hardness of learning (AHL)} classes of quantum states~\cite{hiroka2024computational,fefferman2025hardness,hiroka2025hardness,niroula2026digital}\footnote{We remark that \cite{hiroka2025hardness} studies a different hardness-of-learning task, namely quantum distribution learning, where one wishes to learn classical bitstrings sampled from a distribution generated by a QPT algorithm.}.
Notably,~\cite{hiroka2024computational,fefferman2025hardness} prove that AHL for classes of pure states is equivalent to the existence of pure \emph{one-way state generators (OWSGs)}, i.e., efficient quantum algorithms that generate pure states that are easy to produce but computationally hard to invert.
However, this equivalence is unknown for the case of \emph{mixed states}, leading us to the central question of this work:
\begin{center}
    \emph{Is the average-case hardness of learning mixed states equivalent to mixed one-way state generators?}
\end{center}
This can be viewed as a more complete analogue of the equivalence between the average-case hardness of learning and the existence of one-way functions~\cite{impagliazzo1990no,naor2006learning,naor2015bloom,hirahara2023learning} in the classical setting.
Prior work~\cite{hiroka2024computational} provided initial progress on this question by showing that the existence of EFI\footnote{Here, an EFI pair is a pair of efficiently generatable quantum states that are statistically far but computationally indistinguishable~\cite{brakerski2022computational}.} implies AHL for classes of mixed states by using the equivalence of EFI with a variant of OWSGs, namely secretly-verifiable statistically-invertible OWSGs~\cite{morimae2022one}.
However, it is not clear if proving the converse is possible with this approach, as this variant of OWSGs requires an orthogonality property (statistical invertibility) which an arbitrary AHL instance does not satisfy in general.

In this work, we identify the appropriate cryptographic counterpart of mixed-state AHL: \emph{inefficiently verifiable one-way state generators (IV-OWSGs)}. Our main result shows that a class of mixed quantum states is hard to learn on average if and only if IV-OWSGs exist.

Our equivalence also transfers assumptions and applications between quantum learning and cryptography. Combining our main result with known relationships between IV-OWSGs and EFI pairs~\cite{malavolta2024exponential} shows that EFI implies mixed-state AHL, while a sufficiently strong, exponentially-hard AHL assumption implies EFI. Consequently, AHL in this parameter regime implies statistically binding quantum commitments and semi-honest oblivious transfer~\cite{brakerski2022computational}. In the other direction, any assumption known to imply IV-OWSGs, in particular, assumptions yielding EFI pairs, also gives rise to a class of mixed states that is hard to learn on average. Thus, AHL provides a learning-theoretic lens on cryptographic assumptions that may lie below classical one-way functions.

Finally, by combining existing results, it can be shown that IV-OWSGs exist relative to the SWAP oracle~\cite{goldin2025translating}, whereas OWSGs do not. The construction proceeds from statistically secure single-copy pseudorandom states in the common Haar random state model~\cite{chen2025power} and applies the lifting result of~\cite{goldin2025translating} to obtain the corresponding primitive in the SWAP model.
This oracle separation shows that our characterization cannot, in general, be upgraded from IV-OWSGs to efficiently-verifiable OWSGs by a relativizing argument.

We remark that this does not contradict the equivalence proved in the pure state case~\cite{hirahara2023learning,fefferman2025hardness} because verification can always be made efficient for pure states (see Lemma B.1 in~\cite{morimae2022one}).
Preliminary versions of our results appeared in~\cite{lewis2025computational}.

\subsection{Problem Definition and Results}
\label{sec:ahl-results}

The average-case hardness of learning (AHL) problem intuitively states that a class of quantum states is hard to learn on average if, given polynomially many copies of a state randomly sampled from the class, any efficient adversary can only learn a state close to it with at most some probability $\delta$. 
More formally, following~\cite{hiroka2024computational}, AHL is defined as follows\footnote{We have made some minor modifications to the definition from~\cite{hiroka2024computational} to align better with notation used in the learning theory literature, but this definition nonetheless captures the same premise.}:

\begin{definition}[Average-case hardness of learning (AHL); Definition 3.1 in~\cite{hiroka2024computational}]
  \label{def:ahl}
    Let $\mathcal{C}$ be a class of $n$-qubit quantum states $\rho_x$, indexed by $x \in \{0,1\}^n$, where each $\rho_x$ is generatable by a QPT algorithm.
    The class $\mathcal{C}$ is \emph{$(\epsilon, \delta)$-hard-to-learn} if there exists an efficiently sampleable distribution $\mathcal{D}_\lambda$ over $\{0,1\}^n$ such that for all $t(\lambda) = \mathrm{poly}(\lambda)$ and uniform QPT learning algorithms $\mathcal{A}$,
    \begin{equation}
        \Pr_{\substack{\rho_x \leftarrow \mathcal{D}_\lambda(\mathcal{C})\\y\leftarrow \mathcal{A}(\rho_x^{\otimes t(\lambda)})}}\left[\dtr(\rho_x, \rho_y) \leq \epsilon(\lambda)\right] \leq \delta(\lambda),
    \end{equation}
    where we use $\mathcal{D}_\lambda(\mathcal{C})$ to denote the distribution over $\mathcal{C}$ induced by $\mathcal{D}_\lambda$, which is over the labels of the states.
    When $\epsilon(\lambda), \delta(\lambda) = 1/\mathrm{poly}(\lambda)$, we simply say that $\mathcal{C}$ is \emph{hard on average}.
\end{definition}
\noindent In the above definition, we think of $\lambda$ as the security parameter, while $n = \mathrm{poly}(\lambda)$.

When $\epsilon(\lambda), \delta(\lambda)=1/\mathrm{poly}(\lambda)$, this assumption states that efficient algorithms can only learn the target state up to inverse-polynomial accuracy with inverse-polynomial probability.
One can also strengthen the assumption by choosing $\epsilon(\lambda) = 1/\mathrm{poly}(\lambda), \delta(\lambda) = 1/2^{cn}$ for some constant $c > 0$, which is still reasonable from our current understanding of tomography.
Note that one must have $c < 1$, as otherwise, the AHL assumption becomes trivial~\cite{fefferman2025hardness}.
We refer the reader to~\cite{fefferman2025hardness} for further discussion of noteworthy parameter regimes.

We also remark that this definition only considers \emph{proper} learning, i.e., learning algorithms are only allowed to output a hypothesis from the class $\mathcal{C}$.
This is consistent with previous work~\cite{hiroka2024computational,fefferman2025hardness}.
There is also a notion of improper learning, where the algorithm can approximate $\rho_x$ using any (efficient) description of a quantum state, instead of being restricted to only states in $\mathcal{C}$, but we do not consider improper learning in this work.

The other main object we consider is the \emph{one-way state generator}~\cite{morimae2022one}.
Informally, a one-way state generator consists of a set of algorithms ($\KeyGen, \StateGen, \Ver$) defined as follows:
\begin{itemize}
    \item $\KeyGen$ is a QPT algorithm that outputs a classical key $k$.
    \item $\StateGen(k)$ is a QPT algorithm that, on input key $k$, outputs a (potentially mixed) quantum state $\rho_k$.
    \item $\Ver(k', \rho_k)$ is a QPT algorithm that, on input $\rho_k$ and a bitstring $k'$, outputs $\top$ or $\bot$.
\end{itemize}
The security of a OWSG intuitively says that no QPT adversary can invert the output of $\StateGen$; in other words, given polynomially-many copies of $\rho_k$, it should be hard to recover $k$.
An \emph{inefficiently-verifiable} OWSG (IV-OWSG) is the same as a OWSG except the verification is not restricted to be QPT \cite{malavolta2024exponential,batra2024commitments}.
We refer the reader to \Cref{def:owsg,def:iv-owsg} for formal definitions.
Our main result is as follows.

\begin{theorem}[Equivalence between mixed AHL and IV-OWSGs]
    \label{thm:equiv}
    There exists a class of states that is average-case hard to learn if and only if there exist IV-OWSGs.
\end{theorem}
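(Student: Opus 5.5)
We prove the two directions separately. In both, we identify a key $k$ of the IV-OWSG with a label $x$ of the AHL class, so that $\KeyGen$ corresponds to the sampler $\mathcal{D}_\lambda$ and $\StateGen$ to the QPT preparation of $\rho_x$.

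\textbf{AHL $\Rightarrow$ IV-OWSG.} Suppose $\mathcal{C}$ is $(\epsilon,\delta)$-hard-to-learn with $\epsilon,\delta=1/\mathrm{poly}$ under the sampler $\mathcal{D}_\lambda$. Define $\KeyGen$ to sample $x\leftarrow\mathcal{D}_\lambda$, and $\StateGen(x)$ to run the QPT preparation of $\rho_x$. The inefficient verifier $\Ver(x',\rho)$ will be an approximate test of whether $\dtr(\rho_{x'},\rho)\le\epsilon$ (or a threshold such as $\epsilon/2$).

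The natural choice is the Helstrom measurement between $\rho$ and $\rho_{x'}$, possibly amplified over many copies and combined with a threshold. Since $\Ver$ need not be efficient, we may compute $\rho_{x'}$ exactly and apply any POVM. We then need two properties.
\begin{itemize}
\item \emph{Correctness:} $\Ver(x,\rho_x)$ accepts with overwhelming probability, or with the probability required by \Cref{def:iv-owsg}.
\item \emph{Soundness gap:} if $\dtr(\rho_{x'},\rho_x)>\epsilon$, then $\Ver$ accepts with probability bounded away from its acceptance probability when the distance is small.
\end{itemize}
Security then follows by reduction. An inverter $\mathcal{A}$ succeeding with non-negligible probability outputs $x'$ that passes verification. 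By the soundness gap, such $x'$ satisfies $\dtr(\rho_x,\rho_{x'})\le\epsilon$ with non-negligible probability, so $\mathcal{A}$ is itself a proper learner, contradicting AHL.

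I expect this step to be the main obstacle. A single-copy verifier cannot sharply distinguish "distance $\le\epsilon$" from "distance slightly larger". We need to match the parameters of the IV-OWSG definition, for instance by letting $\Ver$ take polynomially many copies, or by using AHL for every $\epsilon=1/\mathrm{poly}$ and choosing a gap $\epsilon/2$ versus $\epsilon$. If the definition allows a verification-success threshold, I would use the test "$\dtr\le\epsilon/2$" for completeness and rely on hardness at accuracy $\epsilon$. If the security definition requires a fixed acceptance event, I would use a swap-test-like or Helstrom projector whose acceptance probability is $1-\Theta(\dtr)$, then amplify it with Chernoff over $t$ copies to obtain a verifier with a sharp threshold.

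\textbf{IV-OWSG $\Rightarrow$ AHL.} Given an IV-OWSG, let the class be $\{\rho_k\}$ and let $\mathcal{D}_\lambda$ be the distribution of $\KeyGen$. If $\KeyGen$ is quantum, we purify it or take its classical randomness as the label $x$, and let $\rho_x=\StateGen(\KeyGen(x))$ so that labels are sampled uniformly. Suppose a learner outputs $y$ with $\dtr(\rho_x,\rho_y)\le\epsilon$ with probability $>\delta$. Since $\Ver(y,\cdot)$ is a measurement, $\Pr[\Ver(y,\rho_x)=\top]\ge\Pr[\Ver(y,\rho_y)=\top]-\epsilon\ge 1-\negl-\epsilon$. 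So the learner inverts with probability at least $\delta(1-\epsilon-\negl)$, which is non-negligible, contradicting one-wayness.

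Here the parameters must be chosen so that AHL with some fixed $\epsilon,\delta=1/\mathrm{poly}$ follows. One-wayness bounds success by $\negl$, so any constant $\epsilon<1$ and any $\delta=1/\mathrm{poly}$ work. Correctness $\Pr[\Ver(k,\rho_k)=\top]\ge 1-\negl$ is used for the key $y$, so it must hold for all keys, or on average over the key distribution, in which case we apply Markov's inequality.
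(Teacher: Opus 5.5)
Your IV-OWSG $\Rightarrow$ AHL direction is correct and matches the paper's proof. It uses the same class and key distribution, the bound $\Pr[\Ver(y,\rho_x)=\top]\ge\Pr[\Ver(y,\rho_y)=\top]-\epsilon$, and correctness for every key.

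The AHL $\Rightarrow$ IV-OWSG direction has a genuine gap. You never construct a verifier that, on polynomially many copies, accepts the true key and rejects keys of $\epsilon$-far states. The tools you name cannot do this for mixed states.
A Helstrom measurement ``between $\rho$ and $\rho_{x'}$'' is unavailable, because $\Ver$ knows $x'$ but not which state it holds.
A measurement depending only on $\rho_{x'}$ with acceptance probability $1-\Theta(\dtr(\rho,\rho_{x'}))$ does not exist either. For $\rho_{x'}=I/d$, the swap test accepts every $\rho$ with probability $\tfrac12(1+1/d)$. More generally, any POVM accepting $I_S/|S|$ with probability $1-\negl$ accepts some pure state in $S$ with at least that probability, even though that state is at trace distance $1-1/|S|$.
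Chernoff repetition cannot create a gap that the single-copy statistic lacks. A generic multi-copy test of closeness to a fixed mixed state is state certification, which needs $\Omega(2^n/\epsilon^2)$ copies, too many for a QPT $\StateGen$.

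The missing idea is to use the promise that the unknown state lies in the known class $\mathcal{C}$ with $|\mathcal{C}|\le 2^n$, so that only $\mathrm{poly}(n,1/\epsilon)$ copies are needed. Concretely, the paper sets $\StateGen(x)=\rho_x^{\otimes N}$ with $N=N(\epsilon/2,2^{-n})$ from \Cref{thm:hypothesis-selection}. $\Ver(y,\cdot)$ runs quantum hypothesis selection over $\mathcal{C}$ to obtain $x'$ with $\dtr(\rho_x,\rho_{x'})\le\epsilon/2$ except with probability $2^{-n}$. It then accepts iff $\dtr(\rho_y,\rho_{x'})\le\epsilon/2$. By the triangle inequality, this accepts $y=x$ and rejects every $y$ with $\dtr(\rho_x,\rho_y)>\epsilon$, each except with probability $2^{-n}$.

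A second, smaller gap is in your security reduction. You say an inverter with non-negligible success yields a learner contradicting AHL. But AHL bounds proper learners only by one fixed inverse polynomial $\delta$, so an inverter succeeding with probability $\delta/2$ contradicts nothing. The reduction therefore gives only $(\delta+2^{-n})$-security, i.e.\ a weak IV-OWSG, and a weak-to-strong amplification step is still required. The paper obtains negligible security by appealing to Theorem 3.7 of \cite{morimae2022one}.

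You also do not need AHL for every $\epsilon$. With the $\epsilon/2$ accuracy of hypothesis selection and the $\epsilon/2$ acceptance threshold, only hardness at the given $\epsilon$ is used.
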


As an immediate corollary, because IV-OWSG is equivalent to EFI with an exponential loss in the reduction (see Theorems 4.3 and 5.2 in~\cite{malavolta2024exponential}), we obtain that AHL is equivalent to EFI, also up to an exponential loss in the reduction.

\begin{corollary}[Relationship between mixed AHL and EFI]
    If EFI pairs exist, then there exists a class of states that is average-case hard to learn.
    Moreover, if there exists a class of states that is average-case $(\epsilon,\delta)$-hard-to-learn, for $\epsilon(\lambda) = 1/\mathrm{poly}(\lambda)$ and $\delta(\lambda) = 2^{-0.75\lambda}$, then EFI pairs exist.
\end{corollary}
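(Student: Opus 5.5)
The plan is to obtain both directions by composing \Cref{thm:equiv} with the known relationships between IV-OWSGs and EFI pairs from~\cite{malavolta2024exponential}. The one thing to check throughout is that the security parameters survive each reduction, which matters mainly in the second direction.

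\emph{EFI implies AHL.} First, I would invoke the result of~\cite{malavolta2024exponential} (Theorem 4.3) showing that EFI pairs imply IV-OWSGs. Should its formal statement route through other primitives, the chain would be: EFI gives statistically binding, computationally hiding quantum commitments~\cite{brakerski2022computational}; these give secretly-verifiable statistically-invertible OWSGs~\cite{morimae2022one}; and a secret verifier can always be replaced by an unbounded verifier, giving an IV-OWSG. I would then apply the ``if'' direction of \Cref{thm:equiv}. That direction turns an IV-OWSG $(\KeyGen,\StateGen,\Ver)$ into the class $\mathcal{C}=\{\rho_k\}$, with $\mathcal{D}_\lambda$ taken to be the key distribution of $\KeyGen$. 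The result is a class that is hard on average with $\epsilon,\delta=1/\mathrm{poly}(\lambda)$, which is exactly the first claim.

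\emph{Exponentially hard AHL implies EFI.} Here I would open up the ``only if'' construction of \Cref{thm:equiv} rather than use it as a black box, because the polynomial conclusion of that theorem forgets the exponent. The construction is:
\begin{itemize}
    \item $\KeyGen$ samples $x\leftarrow\mathcal{D}_\lambda$;
    \item $\StateGen(x)$ outputs $\rho_x$;
    \item the unbounded $\Ver(x',\cdot)$ performs the optimal (Helstrom-type) test accepting when $\dtr(\rho_{x'},\rho_x)\le\epsilon$.
\end{itemize}
I would check two things directly.
\begin{enumerate}
    \item \emph{Correctness.} When $x'=x$, the test accepts with overwhelming probability; this holds because $\Ver$ may use polynomially many copies.
    \item \emph{Security.} Any QPT inverter whose output $x'$ passes verification yields a proper learner outputting $y=x'$ with $\dtr(\rho_x,\rho_y)\le O(\epsilon)$. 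Its success probability is therefore at most $\delta(\lambda)=2^{-0.75\lambda}$, up to the additive error of the verification test, which is negligible in the threshold gap.
\end{enumerate}
This gives an IV-OWSG whose inversion advantage is at most $2^{-0.75\lambda}$. That is the exponential-security regime required by Theorem 5.2 of~\cite{malavolta2024exponential}, which turns an IV-OWSG with advantage $2^{-c\lambda}$ for a suitable constant $c>1/2$ into EFI. Applying that theorem gives EFI pairs.

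The main obstacle is parameter bookkeeping in the second direction, for three reasons. First, the threshold in~\cite{malavolta2024exponential} is stated relative to the key length, whereas our keys lie in $\{0,1\}^n$ with $n=\mathrm{poly}(\lambda)$. I would first try to reindex so that the key length equals $\lambda$, for example by taking $\mathcal{D}_\lambda$'s sampling randomness of length $\lambda$ as the key, and then confirm that $0.75>1/2$ meets their requirement. Second, the slack from replacing the exact condition $\dtr\le\epsilon$ by a finite-copy verification test must be absorbed into $\epsilon$ (for instance, comparing against threshold $\epsilon/2$ versus $\epsilon$) without degrading $\delta$ beyond $2^{-0.75\lambda}$. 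Third, the error terms introduced in both steps must be negligible relative to $2^{-0.75\lambda}$, or at least small enough that the resulting exponent still exceeds the constant Theorem 5.2 needs.
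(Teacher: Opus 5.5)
Your first direction is the paper's argument: Theorem 4.3 of~\cite{malavolta2024exponential} gives IV-OWSGs from EFI, and the IV-OWSG$\Rightarrow$AHL half of \Cref{thm:equiv} finishes it. Your fallback chain is unnecessary. Your plan for the second direction also matches what the paper implicitly does: do not use \Cref{thm:equiv} as a black box, keep the unamplified security of the AHL$\Rightarrow$IV-OWSG construction, and feed that into Theorem 5.2 of~\cite{malavolta2024exponential}. The gap is that the construction you write down is not that construction. As stated, it does not give an IV-OWSG with the exponentially small error this corollary needs.

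\textbf{The single-copy problem.} Your $\StateGen(x)$ outputs one copy of $\rho_x$, yet you justify correctness by letting $\Ver$ use polynomially many copies. In the OWSG syntax, $\Ver(x',\cdot)$ receives only what $\StateGen$ produced. A single copy cannot separate states at trace distance about $\epsilon$ with small error.

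\textbf{The test does not exist.} There is no ``Helstrom-type test accepting when $\dtr(\rho_{x'},\rho_x)\le\epsilon$''. The Helstrom measurement discriminates two \emph{known} states, whereas here $\rho_x$ is unknown and the alternative is composite. If you do not use the promise that $\rho_x$ lies in $\mathcal{C}$, testing closeness of an unknown state to a fixed $\rho_{x'}$ is state certification, which in the worst case needs $\Omega(2^n/\epsilon^2)$ copies. Even if you mean the minimax-optimal test against all far hypotheses, you still owe an argument that $\mathrm{poly}(n,1/\epsilon)$ copies give \emph{exponentially} small error on both sides against up to $2^n$ alternatives. You list this error size as an obstacle but leave it open.

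\textbf{The missing ingredient.} This is the paper's key tool: quantum hypothesis selection (\Cref{thm:hypothesis-selection}), which exploits the promise $\rho_x\in\mathcal{C}$ with $|\mathcal{C}|\le 2^n$.
\begin{itemize}
\item Take $\StateGen(x)=\rho_x^{\otimes N}$ with $N=N(\epsilon/2,2^{-n})=\mathrm{poly}(n,1/\epsilon)$.
\item Let $\Ver(y,\cdot)$ run hypothesis selection to get $x''$, and accept iff $\dtr(\rho_y,\rho_{x''})\le\epsilon/2$.
\end{itemize}
This construction has three properties.
\begin{itemize}
\item Correctness holds with probability $1-2^{-n}$.
\item By the triangle inequality, any $y$ with $\dtr(\rho_x,\rho_y)>\epsilon$ is accepted with probability at most $2^{-n}$.
\item Outputs with $\dtr(\rho_x,\rho_y)\le\epsilon$ occur with probability at most $\delta$, by AHL applied with $Nt$ copies.
\end{itemize}
Together these give security $\delta(\lambda)+2^{-n}$ before the amplification step in \Cref{prop:ahl-to-iv-owsg}. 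This is at most $2^{-(0.75\lambda-1)}$ when $n\ge\lambda$, and the failure probability of hypothesis selection can be pushed lower at polynomial cost. That bound lies in the $(0.5+D)\lambda$ regime of Theorem 5.2 with, e.g., $D=0.2$ for large $\lambda$. The same choice settles your $\epsilon/2$-versus-$\epsilon$ bookkeeping: accuracy $\epsilon/2$ plus acceptance threshold $\epsilon/2$ means that acceptance, when selection succeeds, implies exactly the AHL event $\dtr(\rho_x,\rho_y)\le\epsilon$. The paper applies Theorem 5.2 to this bound directly, without the key reindexing you contemplate.
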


This corollary imports known applications of EFI into the learning setting. In particular, AHL with the stated exponential hardness implies statistically binding quantum commitments and semi-honest oblivious transfer~\cite{brakerski2022computational}. Conversely, EFI and any other assumption implying IV-OWSGs yield classes of mixed quantum states that are hard to learn on average.

More broadly, our results position AHL as a candidate foundational assumption for quantum cryptography in \emph{Microcrypt}, where useful quantum cryptographic primitives may exist even in the absence of quantum-secure one-way functions. Identifying natural hardness assumptions that suffice for cryptography but are potentially weaker than one-way functions is a central goal in quantum cryptography.
For example, recent proposals have drawn on the hardness of learning quantum states and the hardness of implementing non-collapsing measurements~\cite{fefferman2025hardness,morimae25collapsing}. Our equivalence places mixed-state AHL within this broader program and motivates studying whether AHL can hold without quantum-secure one-way functions.

In addition, we collect several existing results in the literature to show a separation between OWSG and IV-OWSG relative to the SWAP oracle~\cite{goldin2025translating} (see \Cref{def:oracle}).
This shows that a strengthening of \Cref{thm:equiv} to an equivalence between OWSGs and AHL is not possible.

\begin{theorem}
    Relative to the SWAP oracle, IV-OWSG exist while OWSG do not exist.
\end{theorem}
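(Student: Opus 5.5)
The separation follows by composing two existing results with a known impossibility, rather than by a direct oracle construction. The positive direction (IV-OWSGs exist relative to the SWAP oracle) goes through the common Haar random state (CHRS) model. The negative direction (OWSGs do not exist) is obtained by showing that the SWAP oracle does not help invert efficiently verifiable generators.

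\textbf{IV-OWSGs exist relative to SWAP.} We proceed in three steps.

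\emph{Step 1.} We invoke \cite{chen2025power}: in the CHRS model there exist statistically secure single-copy pseudorandom states, and more generally EFI-type primitives. Concretely, the mixed state obtained by applying a key-dependent operation to copies of the common Haar state is statistically indistinguishable from a fixed state, for any unbounded adversary making polynomially many queries.

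\emph{Step 2.} From such single-copy PRS we build an IV-OWSG in the CHRS model. The generator is $\StateGen(k)=$ the PRS state on key $k$. Verification $\Ver(k',\rho)$ is the inefficient optimal measurement projecting onto the support of $\rho_{k'}$, or equivalently the Helstrom test against $\rho_{k'}$. One-wayness holds because an inverter that outputs a $k'$ passing verification with noticeable probability would yield a distinguisher between the PRS and the Haar (mixed) state. The key point is that the Haar state fails verification for a random fixed $k'$ with overwhelming probability by a counting argument over the $2^\lambda$ keys, and this distinguisher is allowed to be inefficient only in its final verification step. This matches the standard PRS$\Rightarrow$OWSG argument of \cite{morimae2022one}, adapted to IV verification.

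\emph{Step 3.} We apply the lifting theorem of \cite{goldin2025translating}, which translates primitives secure in the CHRS model into primitives secure relative to the SWAP oracle of \Cref{def:oracle}. We need to check that IV-OWSG security has the required syntactic form, namely a game between a challenger and a query-bounded adversary. Inefficient verification is part of the challenger, not the adversary, so it is unaffected by the lifting.

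\textbf{OWSGs do not exist relative to SWAP.} Here we use the known result that OWSGs (with efficient verification) can be broken by an inefficient-but-query-bounded attack once the oracle can be simulated. Concretely, \cite{goldin2025translating} shows that SWAP-oracle queries can be simulated using copies of a state, and efficiently verifiable OWSGs are broken in the CHRS model, since the common state can be learned via shadow tomography with polynomially many copies. An efficient $\Ver$ lets the adversary search for a key passing the tomographically simulated verification, and a PP or $\mathsf{QCMA}$-style oracle may be added if needed. I expect this step to be the main obstacle. One must be careful that the breaking attack is query-efficient and matches the notion of ``do not exist'' used in the SWAP model, i.e.\ unbounded time but polynomially many queries. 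Efficient verification is exactly what makes the search possible, which explains why the IV variant survives.
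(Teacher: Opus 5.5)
\textbf{The gap is in Step 2.} The one-wayness argument you adapt is the PRS$\Rightarrow$OWSG reduction of \cite{morimae2022one}. There the distinguisher feeds $t$ copies of its challenge to the inverter and tests the returned key on one more copy. It therefore needs $(t+1)$-copy pseudorandomness for an arbitrary polynomial $t$, whereas a 1PRS only guarantees that a \emph{single} copy of $\ket{\psi_k}$ looks Haar random. Statistical security removes the bound on the distinguisher's running time, which is why the inefficient final check is harmless, as you say. It does not remove the bound on the number of copies. No hybrid argument closes this, because the reduction cannot prepare further copies of $\ket{\psi_k}$ for the unknown $k$.

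In fact no cleverer reduction can rescue this construction. Your states $\ket{\psi_k}$ are pure, and for pure states any verifier with correctness error $\nu$ accepts $k'$ on $\ket{\psi_k}$ with probability at least $|\langle\psi_{k'}|\psi_k\rangle|^2-3\sqrt{\nu}$. So one-wayness against your inefficient verifier implies one-wayness against the canonical projection onto $\ket{\psi_{k'}}$ (cf. Lemma B.1 of \cite{morimae2022one}, which the paper mentions). Relative to SWAP that projection is efficient, because the oracle exchanges $\ket{0}$ and $\ket{\phi_n}$ and is its own inverse, so the generation circuit can be uncomputed. Steps 2 and 3 together would therefore produce an OWSG relative to SWAP, contradicting the second half of the very theorem you are proving.

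The missing idea is to route through a primitive whose multi-copy security is automatic. The paper first lifts the statistically secure 1PRS of \cite{chen2025power} from the CHRS model to SWAP via Corollary 8.4 of \cite{goldin2025translating}. It applies that corollary to the 1PRS itself, so it never has to argue, as your Step 3 must, that the translation covers a security game with an inefficient challenger. It then uses the chain 1PRS $\Rightarrow$ non-interactive commitments with computational hiding and statistical binding \cite{morimae2022quantum} $\Rightarrow$ EFI \cite{brakerski2022computational} $\Rightarrow$ IV-OWSG \cite{malavolta2024exponential}. Passing through EFI is what resolves the copy problem. Both EFI states are efficiently generatable, so single-copy indistinguishability extends to polynomially many copies by a standard hybrid argument. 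The resulting generator outputs states that are in general mixed, so the pure-state obstruction above does not apply. Your Step 1 already mentions EFI-type primitives in the CHRS model; building on those instead of on $\ket{\psi_k}$ directly is the repair.

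For the negative half, the paper simply cites Corollary C.7 of \cite{goldin2025translating}. Your sketch of it is unnecessary and partly off. A Haar-random common state cannot be learned from polynomially many copies. Adding a PP or $\mathsf{QCMA}$ oracle would change the oracle world, so it would no longer establish the statement relative to SWAP alone.
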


We briefly discuss the proof of \Cref{thm:equiv}.

\paragraph{AHL implies IV-OWSG.}
Consider an AHL instance (as in \Cref{def:ahl}) specified by a class $\mathcal{C} = \{\rho_x\}$ of quantum states, an efficiently sampleable distribution $\mathcal{D}_\lambda$, and parameters $\epsilon(\lambda),\delta(\lambda)$.
As a natural approach to construct an IV-OWSG, one may consider taking $\KeyGen$ to sample from $\mathcal{D}_\lambda$ and $\StateGen(x)$ to output the corresponding state $\rho_x$ from $\mathcal{C}$.
However, the correct choice of verification algorithm is not immediately clear:  the canonical choice of verification algorithm for pure states~\cite{morimae2022one} does not apply for mixed states; quantum state certification algorithms~\cite{buadescu2019quantum} require too many copies of the unknown state; the SWAP test for mixed states no longer reflects the trace distance between two states.
Our key observation is that one can use quantum hypothesis selection~\cite{buadescu2021improved} to instantiate the verification algorithm.
Algorithms for quantum hypothesis selection are computationally inefficient, resulting in an IV-OWSG.

\paragraph{OWSG implies AHL.}
Consider an IV-OWSG defined by three algorithms $(\KeyGen, \StateGen,$ $\Ver)$.
For this direction, one can consider the simple construction of AHL: take the distribution $\mathcal{D}_\lambda$ to sample from the same distribution as $\KeyGen$ and the class of states $\mathcal{C}$ to be the same set of states output by $\StateGen$.
Intuitively, the result of the theorem is reasonable: if one cannot invert a quantum state (i.e., find its key), then it should also be hard to learn a classical description of it, which is precisely specified by the key.
However, in AHL, the problem definition (\Cref{def:ahl}) allows the learning algorithm to recover keys of nearby states as well.
Meanwhile, a priori, the verification algorithm of the OWSG only accepts exactly the correct key.
To reconcile these definitional differences, previous work \cite{hiroka2024computational} instead considers OWSGs with an additional property of statistical invertibility: any states with different keys are far apart in trace distance (see \Cref{def:sv-si-owsg}).
This circumvents the issue by ensuring that the learning algorithm in AHL can only learn exactly the correct key, as any other key will not be within the allowed margin of error.
In contrast, our crucial observation is that the verification algorithm in fact also accepts keys corresponding to nearby states.

\subsection*{Organization}
In \Cref{sec:crypto}, we introduce basic cryptographic primitives that we refer to throughout the work.
In \Cref{sec:learning}, we state a guarantee from quantum learning theory which we find useful.
In \Cref{sec:ahl-to-owsg,sec:owsg-to-ahl}, we prove \Cref{thm:equiv}.
Specifically, in \Cref{sec:ahl-to-owsg}, we prove that AHL implies IV-OWSG, and in \Cref{sec:owsg-to-ahl}, we prove that IV-OWSG implies AHL.
In \Cref{sec:separation}, we compile results from the literature to show an oracle separation between IV-OWSG and OWSG.

\subsection*{Acknowledgments}
The authors thank Damiano Abram, Mohammed Barhoush, Taiga Hiroka, Dominik Leichtle, Thomas Vidick, and Chirag Wadhwa,  for helpful discussions at various stages of this project.
A.C. acknowledges support from the National Science Foundation grant CCF-1813814, from the AFOSR under Award Number FA9550-20-1-0108 and from the Quantum Advantage Pathfinder project.
L.L. was supported by a Marshall Scholarship and a U.S. Department of
Energy, Office of Science, Office of Advanced Scientific Computing Research, Department of Energy Computational Science Graduate Fellowship under Award Number DE-SC0026073.

This report was prepared as an account of work sponsored by an agency of the United States Government. Neither the United States Government nor any agency thereof, nor any of their employees, makes any warranty, express or implied, or assumes any legal liability or responsibility for the accuracy, completeness, or usefulness of any information, apparatus, product, or process disclosed, or represents that its use would not infringe privately owned rights. Reference herein to any specific commercial product, process, or service by trade name, trademark, manufacturer, or otherwise does not necessarily constitute or imply its endorsement, recommendation, or favoring by the United States Government or any agency thereof. The views and opinions of authors expressed herein do not necessarily state or reflect those of the United States Government or any agency thereof.

\subsection*{AI Use Disclosure}
All ideas, results, and writing were generated by the authors without AI use.
ChatGPT 5.6 Pro was used to check the correctness of the proofs after they were already completed by the authors.

\section{Preliminaries}
\subsection{Cryptographic Tools and Definitions}
\label{sec:crypto}

In this section, we review some ideas from cryptography, which we will use throughout the paper.
First, we define a one-way function, which is the most fundamental primitive in classical cryptography.
Throughout, we let $\lambda$ denote the security parameter and $n = \mathrm{poly}(\lambda)$.
Hereafter, by a negligible function we mean a function that decays faster than any polynomial.

\begin{definition}[One-Way Function (OWF)]
  Let $\mathcal{F} = \{f_\lambda\}_{\lambda \in \mathbb{N}}$ be a family of efficiently-computable functions $f_\lambda: \{0,1\}^\lambda \to \{0,1\}^{\ell(\lambda)}$.
  $\mathcal{F}$ is a \emph{one-way function} if for every polynomial-time probabilistic algorithm $\mathcal{A}$, there exists a negligible function $\negl(\cdot)$ such that for every security parameter $\lambda \in \mathbb{N}$
  \begin{equation}
    \Pr_{\substack{x \leftarrow \{0,1\}^\lambda\\y\leftarrow f_\lambda(x)}}[\mathcal{A}(1^\lambda, y) \in f_\lambda^{-1}(y)] \leq \negl(\lambda).
  \end{equation}
\end{definition}
In other words, a OWF is a function which is easy to compute but hard to invert for any polynomial-time algorithm.
In this work, we mainly focus on the \emph{quantum version} of one-wayness.
In particular, one can consider quantum objects that behave similarly to a OWF, e.g., a keyed quantum state for which it is hard to efficiently recover the key.
This is exactly the notion of a one-way state generator.
In the following definition, we parameterize the security.

\begin{definition}[$\eta$-Secure One-Way State Generators ($\eta$-Secure OWSGs); Definition 3.1 in~\cite{morimae2022one}]
\label{def:owsg}
  An \emph{$\eta$-secure one-way state generator (OWSG)} is a set of algorithms ($\KeyGen, \StateGen, \Ver$) such that
  \begin{itemize}
    \item $\KeyGen(1^\lambda)$ is a QPT algorithm tha   t, on input the security parameter $\lambda$, outputs a classical key $k \in \{0,1\}^n$.
    \item $\StateGen(k)$ is a QPT algorithm that, on input key $k \in \{0,1\}^n$, outputs a (potentially mixed) quantum state $\rho_k$.
    \item $\Ver(k', \rho_k)$ is a QPT algorithm that, on input $\rho_k$ and a bitstring $k'$, outputs $\top$ or $\bot$.
  \end{itemize}
  Moreover, these algorithms satisfy the following properties
  \begin{itemize}
    \item (Correctness) For any key $k \in \{0,1\}^n$, then
    \begin{equation}
      \Pr_{\substack{\rho_k \leftarrow \StateGen(k)}}[\top \leftarrow \Ver(k, \rho_k)] \geq 1- \negl(\lambda).
    \end{equation}
    \item (Security) For any uniform QPT adversary $\mathcal{A}$ and polynomial $t$,
    \begin{equation}
      \Pr_{\substack{k \leftarrow \KeyGen(1^\lambda)\\\rho_k\leftarrow \StateGen(k)}}[\top \leftarrow \Ver(\mathcal{A}(\rho_k^{\otimes t(\lambda)}), \rho_k)] \leq \eta.
    \end{equation}
  \end{itemize}
  If $\StateGen$ only outputs pure states, we call these \emph{pure OWSGs}.
  Moreover, when $\eta = \negl(\lambda)$, we simply call these OWSGs.
\end{definition}

It is instructive to note that for pure OWSGs, without loss of generality, one may consider $\Ver$ as the following algorithm~\cite{morimae2022one,morimae2022quantum}: on input $k'$ and $\ket{\phi_k}$, measure $\phi_k$ with the projective measurement $\{\ketbra{\phi_{k'}}, I - \ketbra{\phi_{k'}}\}$.
If the result is $\ketbra{\phi_{k'}}$, output $\top$; otherwise, output $\bot$.

We also note that in some definitions of OWSGs, the correctness is only required to be probabilistic over the choice of key.
We modify this slightly to hold for all keys instead, following~\cite{fefferman2025hardness}.
There are also several other variations of OWSGs.
First, we consider inefficiently-verifiable OWSGs (IV-OWSGs).

\begin{definition}[$\eta$-Secure Inefficiently-verifiable OWSGs (IV-OWSGs); Definition 15 in~\cite{batra2024commitments} or Definition 3.2 in~\cite{malavolta2024exponential}]
\label{def:iv-owsg}
An \emph{inefficiently-verifiable one-way state generator (IV-OWSG)} is defined in the same way as a OWSG, except that $\Ver$ is allowed to be inefficient.
\end{definition}

We also introduce secretly-verifiable OWSGs (SV-OWSGs)~\cite{morimae2022one}.
Here, one can think of the verification algorithm $\Ver$ as being given two classical keys $k,k'$ instead of a classical key $k'$ and a copy of an unknown quantum state $\rho_k$ as in \Cref{def:owsg}.
Then, without loss of generality, one can replace $\Ver$ with simply checking if $k = k'$.

\begin{definition}[Secretly-verifiable OWSGs (SV-OWSGs); Definition 7.1 in~\cite{morimae2022one}]
\label{def:sv-owsg}
A \emph{secretly-verifiable one-way state generator (SV-OWSG)} is a set of algorithms $(\KeyGen, \StateGen)$, where $\KeyGen, \StateGen$ are defined in the same way as for OWSGs. The only difference is that security is defined as
\begin{itemize}
  \item (Security) For any uniform QPT adversary $\mathcal{A}$ and polynomial $t$,
  \begin{equation}
    \Pr_{\substack{k \leftarrow \KeyGen(1^\lambda)\\\rho_k \leftarrow \StateGen(k)}}[k \leftarrow \mathcal{A}(\rho_k^{\otimes t(\lambda)})] \leq \negl(\lambda).
  \end{equation}
\end{itemize}
\end{definition}

Moreover, there is another variant which is similar to SV-OWSGs but additionally requires that all of the states generated are far apart in trace distance.

\begin{definition}[Secretly-verifiable statistically-invertible OWSGs (SV-SI-OWSGs); Definition 7.3 in~\cite{morimae2022one}]
\label{def:sv-si-owsg}
A \emph{secretly-verifiable and statistically-invertible one-way state generator (SV-SI-OWSG)} is a set of algorithms $(\KeyGen, \StateGen)$ such that
\begin{itemize}
  \item $\KeyGen(1^\lambda)$ is a QPT algorithm that, on input the security parameter $\lambda$, outputs a classical key $k\in \{0,1\}^n$.
  \item $\StateGen(k)$ is a QPT algorithm that, on input key $k \in \{0,1\}^n$, outputs a (potentially mixed) quantum state $\rho_k$.
\end{itemize}
Moreover, these algorithms satisfy the following properties
\begin{itemize}
  \item (Statistical invertibility) For any $k, k'$ with $k \neq k'$, then
  \begin{equation}
    \dtr(\rho_k, \rho_{k'}) \geq 1-\negl(\lambda).
  \end{equation}
  \item (Computational non-invertibility) For any uniform QPT adversary $\mathcal{A}$ and polynomial $t$,
  \begin{equation}
    \Pr_{\substack{k \leftarrow \KeyGen(1^\lambda)\\\rho_k \leftarrow \StateGen(k)}}[k \leftarrow \mathcal{A}(\rho_k^{\otimes t(\lambda)})] \leq \negl(\lambda).
  \end{equation}
\end{itemize}
\end{definition}

While this may seem like a strange requirement at first, SV-SI-OWSGs are interesting because they are equivalent to EFI, a primitive believed to be a minimal assumption for quantum cryptography.

\begin{definition}[EFI; Definition 3.1 in~\cite{brakerski2022computational}]
\label{def:efi}
An \emph{EFI pair} is a uniform QPT algorithm \textsf{EFI} such that
\begin{itemize}
  \item (Efficient generation) $\EFI$ is a uniform QPT algorithm that on input $(1^\lambda, b)$ for a security parameter $\lambda \in \mathbb{N}$ and a bit $b \in \{0,1\}$, outputs a (potentially mixed) quantum state $\rho_b$. That is, $\EFI(1^\lambda, b) = \rho_{b}$.
  \item (Statistically far) 
  \begin{equation}
    \dtr(\rho_{0}, \rho_{1}) \geq 1 - \negl(\lambda).
  \end{equation}
  \item (Computational indistinguishability) For any uniform QPT adversary $\mathcal{A}$,
  \begin{equation}
    |\Pr[1 \leftarrow \mathcal{A}(\rho_{0})] - \Pr[1 \leftarrow \mathcal{A}(\rho_{1})]| \leq \negl(\lambda).
  \end{equation}
\end{itemize}
\end{definition}

\begin{theorem}[Theorem 7.7 in~\cite{morimae2022one}]
  SV-SI-OWSGs exist if and only if EFI pairs exist.
\end{theorem}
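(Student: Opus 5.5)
The plan is to prove the two directions separately, with a hybrid argument for EFI $\Rightarrow$ SV-SI-OWSG and a Goldreich--Levin-type argument for the converse; the converse is where I expect the difficulty.

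\textbf{EFI $\Rightarrow$ SV-SI-OWSG.} Given an EFI pair with $\EFI(1^\lambda,b)=\rho_b$, I would let $\KeyGen$ output a uniform $k\in\{0,1\}^n$ and set
\[
\StateGen(k)=\rho_{k_1}\otimes\cdots\otimes\rho_{k_n}.
\]
\emph{Statistical invertibility.} If $k\neq k'$ they differ in some coordinate $i$. Measuring register $i$ with the Helstrom measurement for $\rho_0$ versus $\rho_1$ shows $\dtr(\rho_k,\rho_{k'})\geq \dtr(\rho_0,\rho_1)\geq 1-\negl(\lambda)$.

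\emph{Computational non-invertibility.} Suppose an adversary, given $\rho_k^{\otimes t}$, outputs $k$ with probability $p$. I would run a hybrid over the $nt$ registers, replacing each $\rho_{k_i}$ by $\rho_0$ one register at a time. Adjacent hybrids are computationally indistinguishable. The reduction samples $k$ itself, prepares all other registers using $\EFI$, inserts the challenge register, and outputs $1$ iff the adversary returns $k$. In the final hybrid the input is independent of $k$, so the success probability is at most $2^{-n}$. Hence $p\le 2^{-n}+nt\cdot\negl(\lambda)$.

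\textbf{SV-SI-OWSG $\Rightarrow$ EFI.} Here I would use a Goldreich--Levin hardcore bit. Fix a polynomial $t$ to be chosen, and define
\[
\sigma_b=\mathbb{E}_{k\leftarrow\KeyGen}\,\mathbb{E}_{r\leftarrow\{0,1\}^n}\ \ketbra{r}\otimes\rho_k^{\otimes t}\otimes\ketbra{\langle k,r\rangle\oplus b}.
\]
This is clearly efficiently generatable.

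\emph{Statistical farness.} I would show that $k$ can be recovered from $\rho_k^{\otimes t}$ with probability $1-\negl(\lambda)$; then computing $\langle k,r\rangle$ and comparing it with the last bit distinguishes $\sigma_0$ from $\sigma_1$. Statistical invertibility together with Fuchs--van de Graaf gives $F(\rho_k,\rho_{k'})\le\sqrt{2\cdot\negl(\lambda)}$ for $k\neq k'$. Fidelity is multiplicative under tensor powers, so choosing $t\geq n$ makes $F(\rho_k^{\otimes t},\rho_{k'}^{\otimes t})\le 2^{-2n}\negl(\lambda)$. The pretty-good measurement over the $2^n$ candidate keys then errs with probability at most about $\sum_{k'\neq k}F(\rho_k^{\otimes t},\rho_{k'}^{\otimes t})$, which is negligible after averaging over $k$. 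No min-entropy assumption on $\KeyGen$ is needed: the error bound uses only pairwise fidelities.

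\emph{Computational indistinguishability.} A distinguisher with advantage $\varepsilon$ between $\sigma_0$ and $\sigma_1$ yields a predictor of $\langle k,r\rangle$ from $(r,\rho_k^{\otimes t})$ with advantage $\varepsilon/2$. I would then apply the Goldreich--Levin decoder to recover $k$ with probability $\mathrm{poly}(\varepsilon/n)$, contradicting computational non-invertibility.

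\textbf{Main obstacle.} The Goldreich--Levin decoder queries the predictor on many correlated values of $r$, while the predictor's advice here is a quantum state that may be disturbed by use. I would first try to avoid rewinding altogether. Since $\StateGen$ is only available to the challenger, the inverter instead receives $\rho_k^{\otimes tT}$ for a polynomial $T$ equal to the number of Goldreich--Levin queries, and answers each query with a fresh block of $t$ copies. On a fixed $k$ the predictor's answers are then independent across queries. Goldreich--Levin requires a good predictor only for a noticeable fraction of keys, which follows by Markov's inequality over $k$. If this fresh-copy approach fails somewhere, the fallback is to go through statistically-binding computationally-hiding commitments and then use the commitments-to-EFI equivalence of~\cite{brakerski2022computational}.
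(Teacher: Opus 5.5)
The paper gives no proof of this statement. It is quoted as Theorem 7.7 of Morimae--Yamakawa and used as a black box, so there is no in-paper argument to compare yours against. Judged on its own, your proposal is a correct self-contained proof, and the commitment-based fallback is unnecessary. A few places should be made precise:

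\emph{Uniformity of the hybrid.} In the hybrid for EFI $\Rightarrow$ SV-SI-OWSG, let the reduction choose the hybrid index uniformly at random. This loses only a factor $nt$ and makes it a single uniform QPT distinguisher, as the EFI definition requires.

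\emph{Statistical farness.} The Barnum--Knill bound for the pretty-good measurement carries the prior weights: $P_{\mathrm{err}}\le\sum_{k\neq k'}\sqrt{p_kp_{k'}}\,F(\rho_k,\rho_{k'})^t$. By Cauchy--Schwarz this is at most $2^n\max_{k\neq k'}F(\rho_k,\rho_{k'})^t$. With your bound $F(\rho_k,\rho_{k'})\le\sqrt{2\nu(\lambda)}$, it is negligible for, e.g., $t=n+\lambda$. If you insist on $t=n$, you also need $2^{-n}$ to be negligible. That does hold, since a uniformly random guess inverts with probability exactly $2^{-n}$, and non-invertibility then forces $2^{-n}\le\negl(\lambda)$.

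\emph{Computational indistinguishability.} Your fresh-copies device is sound. Conditioned on $k$ and the query strings, the predictor's answers are independent. With Rackoff's pairwise-independent query strings, the correctness indicators are therefore pairwise independent, and the classical Chebyshev analysis of Goldreich--Levin goes through verbatim. The only cost is $tT$ copies, which the non-invertibility game permits. If you want to save copies, a quantum Goldreich--Levin lemma in the style of Adcock--Cleve works with a single block of $t$ copies. It runs the predictor once coherently on a superposition over $r$ and extracts $k$ Bernstein--Vazirani style with inverse-polynomial probability.
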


\subsection{Quantum Learning Theory}
\label{sec:learning}
First, recall the definition of trace distance between two quantum states.

\begin{definition}[Trace distance]
    Let $\rho, \sigma$ be two quantum states.
    The \emph{trace distance} between $\rho$ and $\sigma$ is
    \begin{equation}
        \dtr(\rho, \sigma) \triangleq \frac{1}{2}\norm{\rho - \sigma}_1,
    \end{equation}
    where $\norm{M}_1 = \tr(\sqrt{M^\dagger M})$ denotes the trace norm.
\end{definition}

We also find the following guarantee from quantum learning theory useful.
It provides a sample complexity upper bound for the task of \emph{hypothesis selection}, i.e., finding the best approximation to an unknown state from a class of hypothesis states.

\begin{theorem}[Theorem 1.5 in~\cite{buadescu2021improved}]
    \label{thm:hypothesis-selection}
    There is a quantum algorithm that, given $m$ fixed hypothesis states $\mathcal{C} = \{\sigma_1,\dots,\sigma_m\}$, where $\sigma_i \in \mathbb{C}^{d\times d}$, parameters $0 < \epsilon,\delta < 1/2$, and access to unentangled copies of a state $\rho \in \mathbb{C}^{d\times d}$, where $\rho \in \mathcal{C}$, uses
    \begin{equation}
        N(\epsilon, \delta) = \min\left\{\frac{(\log^2 m + L_1)(\log d)}{\epsilon^4} \cdot \mathcal{O}(L_1), \frac{\log^3m + \log(L_2/\delta) \cdot \log m}{\epsilon^2} \cdot \mathcal{O}(L_2 \log(L_2/\delta)\right\}
    \end{equation}
    copies of $\rho$, where $L_1 = \log\left(\frac{\log d}{\delta\epsilon}\right)$ and $L_2 = \log(1/\epsilon)$ and has the following guarantee: with probability at least $1-\delta$, it outputs $k$ such that
    \begin{equation}
        \dtr(\rho, \sigma_k) \leq \epsilon.
    \end{equation}
\end{theorem}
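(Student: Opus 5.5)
This result is imported from~\cite{buadescu2021improved}, so we only outline how a proof would go. The plan is to follow the classical Scheffé tournament / minimum-distance estimator and replace the classical Scheffé sets by Helstrom projectors.

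\textbf{Helstrom projectors and the minimum-distance estimator.} For every ordered pair $(i,j)$, let $\Pi_{ij}$ be the projector onto the positive eigenspace of $\sigma_i-\sigma_j$. Then
\begin{equation}
\tr(\Pi_{ij}(\sigma_i-\sigma_j)) = \dtr(\sigma_i,\sigma_j).
\end{equation}
Suppose we had estimates $\hat p_{ij}$ of $\tr(\Pi_{ij}\rho)$, each accurate to within $\epsilon/4$. The estimator outputs the index $k$ minimizing
\begin{equation}
\Delta_k=\max_{j} \bigl|\tr(\Pi_{kj}\sigma_k)-\hat p_{kj}\bigr|.
\end{equation}

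\textbf{Correctness of the estimator.} Write $\rho=\sigma_{i^*}\in\mathcal C$. The true index has $\Delta_{i^*}\le \epsilon/4$, so the selected $k$ also satisfies $\Delta_k\le\epsilon/4$. Applying this to the pair $(k,i^*)$ gives
\begin{equation}
\dtr(\sigma_k,\rho)=\tr(\Pi_{ki^*}(\sigma_k-\rho)) \le \Delta_k+\epsilon/4 \le \epsilon/2.
\end{equation}
So this part is routine once the estimates are available. In fact it suffices to decide, for each pair, whether $|\tr(\Pi_{kj}\sigma_k)-\tr(\Pi_{kj}\rho)|$ exceeds a threshold, which is the form needed in the next step.

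\textbf{Main obstacle: copy complexity.} The $\Pi_{ij}$ do not commute, so the naive approach of measuring each of the $m^2$ observables separately costs $\mathrm{poly}(m)/\epsilon^2$ copies. The stated bound is polylogarithmic in $m$, and getting there is the real difficulty. I would first reduce the tournament to a sequence of \emph{threshold search} problems: given a list of observables with thresholds, find one whose expectation on $\rho$ exceeds its threshold, or certify that none does. Iterating over candidates then costs only polylogarithmically many rounds, for instance by a halving/elimination scheme over the $m$ hypotheses.

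\textbf{Threshold search: two routes for the two bounds.} Each threshold-search call would be solved with $\tilde O(\log^2 m/\epsilon^2)$ copies, using a sequence of gentle (blurred) threshold measurements applied to a single batch of copies, with a quantum union bound controlling the damage. The $\min$ in the statement should correspond to two ways of organising these calls.
\begin{itemize}
\item The first bound, with a $\log d$ factor and $\epsilon^{-4}$, would come from online learning / shadow tomography. A hypothesis state is maintained by matrix multiplicative weights, and threshold search returns a violated observable. A regret bound of $O(\log d/\epsilon^2)$ limits the number of updates.
\item The second bound, which is dimension-free, would come from running threshold search directly inside the tournament, using the promise $\rho\in\mathcal C$.
\end{itemize}
In both cases, failure probabilities are amplified to $\delta$ by repetition and median tricks, which produces the $\log(1/\delta)$ and $L_1,L_2$ factors. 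Carefully tracking the gentleness parameters to get exactly these exponents is the step I expect to be most delicate.
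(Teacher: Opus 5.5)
The paper gives no proof of this statement; it is quoted from B\u{a}descu--O'Donnell. Deferring to the citation is therefore exactly what the paper does, and the parts of your outline that you actually argue are sound.

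The minimum-distance step is correct: for $\rho=\sigma_{i^*}$, $\dtr(\sigma_k,\rho)=\tr(\Pi_{ki^*}\sigma_k)-\tr(\Pi_{ki^*}\rho)\le\Delta_k+\epsilon/4$. Your first route also correctly accounts for the first term of the minimum. That route is shadow tomography on the $\mathcal{O}(m^2)$ Helstrom observables, i.e.\ $\mathcal{O}(\log d/\epsilon^2)$ matrix-multiplicative-weights updates, each located by one threshold search on $\tilde{\mathcal{O}}(\log^2 m/\epsilon^2)$ fresh copies, followed by the classical estimator. That term alone gives $N=\mathrm{poly}(1/\epsilon,\log(1/\delta),\log m,\log d)$. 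This is all the paper uses later, with $d=2^n$, $m\le 2^n$, $\delta=2^{-n}$.

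\textbf{The gap is the second, dimension-free term.} You mention a \emph{halving/elimination scheme}, but nothing in your formulation makes a single returned violation remove more than one candidate. Take the natural scheme: threshold-search $\{\Pi_{kj}, I-\Pi_{kj}\}_j$ against the values $\tr(\Pi_{kj}\sigma_k)$ and discard $k$ when a violation is returned. It eliminates one hypothesis per call, so it can need $\Theta(m)$ calls and $\tilde{\mathcal{O}}(m\log^2 m/\epsilon^2)$ copies.

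The missing idea is a mistake-bound argument that uses the promise $\rho=\sigma_{i^*}$, for example halving with medians:
\begin{itemize}
\item Keep a set $S\ni i^*$, and let $\mu_S(i,j)$ be a median of $\{\tr(\Pi_{ij}\sigma_l)\}_{l\in S}$.
\item Run threshold search with slack $\epsilon/2$ on all $\Pi_{ij}$ and $I-\Pi_{ij}$, with thresholds $\mu_S(i,j)+\epsilon$ and $1-\mu_S(i,j)+\epsilon$.
\item If it returns, say, $\Pi_{ij}$, then $\tr(\Pi_{ij}\rho)>\mu_S(i,j)+\epsilon/2$. So every $l\in S$ with $\tr(\Pi_{ij}\sigma_l)\le\mu_S(i,j)$ can be discarded. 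This is at least half of $S$, and never includes $i^*$, since $\tr(\Pi_{ij}\sigma_{i^*})=\tr(\Pi_{ij}\rho)$.
\end{itemize}
Hence at most $\log_2 m$ calls return a violation. The first \emph{pass} certifies $\lvert\tr(\Pi_{ij}\rho)-\mu_S(i,j)\rvert\le\epsilon$ for every pair, which is exactly the input your estimator needs (giving error $2\epsilon$; rescale). The total is $\mathcal{O}(\log m)$ threshold searches, i.e.\ $\tilde{\mathcal{O}}(\log^3 m/\epsilon^2)$ copies with no dependence on $d$. The mistake bound $\log m$ replaces the regret bound $\log d/\epsilon^2$, which is why both $\log d$ and one factor of $\epsilon^{-2}$ disappear.

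Relatedly, $L_2=\log(1/\epsilon)$ cannot come from \emph{repetition and median tricks}. It does not depend on $\delta$, and the shape $\mathcal{O}(L_2\log(L_2/\delta))$ indicates an outer loop of $L_2$ runs at failure probability $\delta/L_2$ each. One example would be a search over distance scales, as needed when $\rho$ is only promised to be close to $\mathcal{C}$. The halving argument above does not need such a loop under the promise $\rho\in\mathcal{C}$.
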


\section{AHL implies IV-OWSG}
\label{sec:ahl-to-owsg}

In this section, we prove that AHL implies IV-OWSG.
For pure states, it is easy to construct IV-OWSGs from AHL because there is a canonical choice of verification algorithm (simply measure the projector).
However, the main difficulty for mixed states is the absence of such a verification algorithm, where, e.g., quantum state certification algorithms or the SWAP test require too many copies of the unknown state.
Our key observation is that one can use quantum hypothesis selection (\Cref{thm:hypothesis-selection}) to instantiate the verification algorithm for mixed states.
However, hypothesis selection is computationally inefficient, resulting in an IV-OWSG.

\begin{prop}[AHL $\Rightarrow$ IV-OWSG]
\label{prop:ahl-to-iv-owsg}
If there exists a class of states that is average-case hard to learn (as in \Cref{def:ahl}), then there exists an IV-OWSG.
\end{prop}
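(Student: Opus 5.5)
Given an $(\epsilon,\delta)$-hard class $\mathcal{C}=\{\rho_x\}$ with sampler $\mathcal{D}_\lambda$, where $\epsilon,\delta = 1/\mathrm{poly}(\lambda)$, I would set $\KeyGen(1^\lambda)$ to sample $x\leftarrow\mathcal{D}_\lambda$ and $\StateGen(x)$ to run the QPT generator of $\rho_x$. The inefficient verifier will be built from hypothesis selection (\Cref{thm:hypothesis-selection}) over the whole class $\{\sigma_y=\rho_y : y\in\{0,1\}^n\}$, so $m=2^n$. Since the states are $n$-qubit, $d=2^n$ and $\log m=\log d=n$. With accuracy $\epsilon'=\epsilon/2$ and failure probability $\delta'=\negl(\lambda)$, say $2^{-\lambda}$, the second bound of \Cref{thm:hypothesis-selection} gives $N=\mathrm{poly}(n,1/\epsilon,\lambda)=\mathrm{poly}(\lambda)$ copies.

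The OWSG verifier in \Cref{def:owsg} receives the state $\rho_k$, so I interpret its input as $N$ copies, $\rho_k^{\otimes N}$. This is the standard convention for mixed OWSGs, which allow $\Ver$ to take $\mathrm{poly}$ copies; if needed, I would explicitly let $\StateGen$ output $\rho_k^{\otimes N}$ as a single state and give the adversary $t$ copies of that. With this convention, $\Ver(k',\cdot)$ runs hypothesis selection on its copies, obtains an index $\hat y$, and accepts iff
\[
\dtr(\rho_{\hat y},\rho_{k'})\le \epsilon/2 ,
\]
which it computes inefficiently from the classical descriptions of the generators.

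\textbf{Correctness.} On input $k'=k$, with probability at least $1-\delta'$ the selected $\hat y$ satisfies $\dtr(\rho_k,\rho_{\hat y})\le\epsilon/2$, so $\Ver$ accepts with probability $1-\negl(\lambda)$ for every key $k$.

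\textbf{Security.} Suppose a QPT $\mathcal{A}$, given $t$ copies, makes $\Ver$ accept with probability $\eta$. Acceptance means $\dtr(\rho_{\hat y},\rho_{k'})\le\epsilon/2$. Except with probability $\delta'$, we also have $\dtr(\rho_k,\rho_{\hat y})\le\epsilon/2$, and the triangle inequality then gives $\dtr(\rho_k,\rho_{k'})\le\epsilon$. The adversary's copies and the verifier's copies are independent copies of the same $\rho_k$ given $k$, so this conditioning is legitimate. Hence $\mathcal{A}$ itself is a learner whose output $k'$ is $\epsilon$-close with probability at least $\eta-\delta'$, and AHL forces $\eta\le\delta+\negl(\lambda)$.

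\textbf{Main obstacle.} This gives only a $\delta$-secure IV-OWSG with $\delta=1/\mathrm{poly}(\lambda)$, not a negligibly secure one. I expect this gap to be the hardest step. The first fix I would try is to assume AHL holds for some $\delta$ while $\epsilon$ is fixed, and note that AHL is quantified for all polynomials $t$ but not for all $\delta$. If the definition is read as holding for every inverse polynomial $\delta$, i.e.\ for every polynomial $p$ the success probability is at most $1/p$, then it is effectively negligible and we are done. Otherwise I would amplify with a parallel repetition: use keys $(x_1,\dots,x_\ell)$ with state $\bigotimes_i\rho_{x_i}^{\otimes N}$, and have $\Ver$ accept iff every coordinate accepts. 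I would then invoke known hardness-amplification results for weak OWSGs (weak-to-strong amplification for OWSGs, which carries over to inefficient verifiers since the reduction only runs the adversary) to reach $\negl(\lambda)$ security. A secondary subtlety is that the verifier's decision is randomized and must be inverted copy-independently; the independence argument above handles this.
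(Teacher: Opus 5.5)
Your proposal is essentially the paper's proof. It uses the same $\KeyGen$, $\StateGen$ outputting $\rho_x^{\otimes N}$, and hypothesis-selection verifier with threshold $\epsilon/2$; the same triangle-inequality reduction turns an inverter with $t$ copies into a learner with $tN$ copies, giving $(\delta+\negl(\lambda))$-security; and both finish with weak-to-strong amplification, for which the paper cites Theorem 3.7 of Morimae--Yamakawa. One caution on that last step: your reason that amplification ``carries over to inefficient verifiers since the reduction only runs the adversary'' is incorrect, because a Yao-style amplification reduction must run $\Ver$ on a fresh challenge copy to recognize which of its trials succeeded, so with an inefficient $\Ver$ this step needs a separate argument — and the paper's citation, which is to a result for QPT $\Ver$, leaves the same point unaddressed.
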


\begin{proof}
Suppose there exists an AHL instance specified by a class $\mathcal{C}$ of $n$-qubit quantum states indexed by bitstrings in $\mathcal{X} \subseteq \{0,1\}^n$, an efficiently sampleable distribution $\mathcal{D}_\lambda$ over $\mathcal{X}$, and $\epsilon(\lambda),\delta(\lambda) = 1/\mathrm{poly}(\lambda)$.
By definition of AHL, for all polynomials $t(\lambda) = \mathrm{poly}(\lambda)$ and all uniform QPT learning algorithms $\mathcal{A}$, then
\begin{equation}
    \Pr_{\substack{\rho_x \leftarrow \mathcal{D}_\lambda(\mathcal{C})\\y \leftarrow \mathcal{A}(\rho_x^{\otimes t(\lambda)})}}[\dtr(\rho_x, \rho_y) \leq \epsilon(\lambda)] \leq \delta(\lambda).
\end{equation}
Here, we use $\mathcal{D}_\lambda(\mathcal{C})$ to denote the distribution over $\mathcal{C}$ induced by $\mathcal{D}_\lambda$, which is over the labels
of the states instead. 
We construct an IV-OWSG as follows:
\begin{itemize}
    \item $\KeyGen(1^\lambda)$ outputs $x \in \{0,1\}^n$ sampled from $\mathcal{D}_\lambda$.
    \item $\StateGen(x)$ outputs $\rho_x^{\otimes N}$, where $\rho_x \in \mathcal{C}$ and $N = N(\epsilon(\lambda)/2, 1/2^n)$, where $N$ is defined in \Cref{thm:hypothesis-selection}.
    \item $\Ver(y, \rho_x^{\otimes N})$ first runs the hypothesis selection algorithm from \Cref{thm:hypothesis-selection} on $\rho_x^{\otimes N}$ and the class $\mathcal{C}$ to obtain an output $x'$. Then, it outputs $\top$ iff $\dtr(\rho_y, \rho_{x'}) \leq \epsilon(\lambda)/2$.
\end{itemize}
Note that $N = N(\epsilon(\lambda)/2, 1/2^n) = \mathrm{poly}(n)$ because $N(\epsilon, \delta) = \mathrm{poly}(1/\epsilon, \log(1/\delta), \log(m))$, and the size of the hypothesis class is $m = |\mathcal{C}| \leq 2^n$.
Thus, $\StateGen$ is a QPT algorithm because it only needs to output a $\mathrm{poly}(n)$ copies of an efficiently generatable state (since all states in $\mathcal{C}$ can be generated by a QPT algorithm).

We also remark that this is only an IV-OWSG because the hypothesis selection algorithm from~\cite{buadescu2021improved} is computationally inefficient.
Also, because the verification is allowed to be inefficient and $y,x'$ are known along with a classical description of the class $\mathcal{C}$, then the trace distance calculation can be performed.

First, we show correctness, i.e., for any $x \leftarrow \KeyGen(1^\lambda)$, then
\begin{equation}
    \Pr_{\rho_x^{\otimes N} \leftarrow \StateGen(x)}[\top \leftarrow \Ver(x, \rho_x^{\otimes N})] \geq 1 - \negl(\lambda).
\end{equation}
By the guarantees of hypothesis selection with our choice of parameters (\Cref{thm:hypothesis-selection}), then running hypothesis selection on $\rho_x^{\otimes N}$ and the class $\mathcal{C}$ outputs some $x'$ such that $\dtr(\rho_x, \rho_{x'}) \leq \epsilon(\lambda)/2$ with probability at least $1-1/2^n$.
Then, the verification algorithm simply checks if $\dtr(\rho_x, \rho_{x'}) \leq \epsilon(\lambda)/2$, which clearly holds as long as hypothesis selection is successful.
Thus, it is clear that we have
\begin{equation}
    \Pr_{\rho_x^{\otimes N} \leftarrow \StateGen(x)}[\top \leftarrow \Ver(x, \rho_x^{\otimes N})] \geq 1 - \frac{1}{2^n}.
\end{equation}

It remains to show security.
We want to show that for any uniform QPT adversary $\mathcal{A}$ and any polynomial $t$,
\begin{equation}
	\Pr_{\substack{x\leftarrow \KeyGen(1^\lambda)\\\rho_x^{\otimes N}\leftarrow \StateGen(x)}}[\top \leftarrow \Ver(\mathcal{A}(\rho_x^{\otimes (N \cdot t(\lambda))}), \rho_x^{\otimes N})] \leq \negl(\lambda).
\end{equation}
Suppose for the sake of contradiction that there exists a uniform QPT algorithm $\mathcal{B}$ and a polynomial $t$ such that
\begin{equation}
	\Pr_{\substack{x\leftarrow \KeyGen(1^\lambda)\\\rho_x^{\otimes N}\leftarrow \StateGen(x)}}[\top \leftarrow \Ver(\mathcal{B}(\rho_x^{\otimes (N \cdot t(\lambda))}), \rho_x^{\otimes N})] > \delta(\lambda) + \frac{1}{2^n}.
\end{equation}
To arrive at a contradiction, we show that $\mathcal{B}$ can also learn $\mathcal{C}$, contradicting AHL.
Define the set
\begin{equation}
	S_{\mathrm{good}} \triangleq \left\{(x,y) \in \mathcal{X}^2 : \dtr(\rho_x, \rho_y) \leq \epsilon(\lambda), \rho_x \leftarrow \mathcal{C}(x), \rho_y \leftarrow \mathcal{C}(y)\right\}.
\end{equation}
Recall here that $\mathcal{X}$ is the set of bitstring labels for states in the class $\mathcal{C}$.
Moreover, we use $\rho_x \leftarrow \mathcal{C}(x)$ to denote the state $\rho_x \in \mathcal{C}$ corresponding to the bitstring label $x$.
Then, we have
\begin{align}
    &\delta(\lambda) + \frac{1}{2^n}\\
    &< \Pr_{\substack{x\leftarrow \KeyGen(1^\lambda)\\\rho_x^{\otimes N}\leftarrow \StateGen(x)}}[\top \leftarrow \Ver(\mathcal{B}(\rho_x^{\otimes (N \cdot t(\lambda))}), \rho_x^{\otimes N})]\\
    &= \sum_{x,y} \Pr[x \leftarrow \KeyGen(1^\lambda)] \Pr_{\rho_x^{\otimes N} \leftarrow \StateGen(x)}[y \leftarrow \mathcal{B}(\rho_x^{\otimes (N \cdot t(\lambda))})]\Pr_{\rho_x^{\otimes N} \leftarrow \StateGen(x)}[\top \leftarrow \Ver(y, \rho_x^{\otimes N})]\\
    &= \sum_{x,y \in S_{\mathrm{good}}} \Pr[x \leftarrow \mathcal{D}_\lambda] \Pr_{\rho_x \leftarrow \mathcal{C}(x)}[y \leftarrow \mathcal{B}(\rho_x^{\otimes (N \cdot t(\lambda))})]\Pr_{\rho_x \leftarrow \mathcal{C}(x)}[\top \leftarrow \Ver(y, \rho_x^{\otimes N})]\\
    &+ \sum_{x,y \not\in S_{\mathrm{good}}} \Pr[x \leftarrow \mathcal{D}_\lambda] \Pr_{\rho_x \leftarrow \mathcal{C}(x)}[y \leftarrow \mathcal{B}(\rho_x^{\otimes (N \cdot  t(\lambda))})]\Pr_{\rho_x \leftarrow \mathcal{C}(x)}[\top \leftarrow \Ver(y, \rho_x^{\otimes N})]\\
    &\leq \sum_{x,y \in S_{\mathrm{good}}} \Pr[x \leftarrow \mathcal{D}_\lambda] \Pr_{\rho_x \leftarrow \mathcal{C}(x)}[y \leftarrow \mathcal{B}(\rho_x^{\otimes (N \cdot t(\lambda))})] + \frac{1}{2^n}\sum_{x,y \notin S_{\mathrm{good}}} \Pr[x \leftarrow \mathcal{D}_\lambda] \Pr_{\rho_x \leftarrow \mathcal{C}(x)}[y \leftarrow \mathcal{B}(\rho_x^{\otimes (N \cdot t(\lambda))})]\\
    &= \sum_{x,y \in S_{\mathrm{good}}} \Pr[x \leftarrow \mathcal{D}_\lambda] \Pr_{\rho_x \leftarrow \mathcal{C}(x)}[y \leftarrow \mathcal{B}(\rho_x^{\otimes (N \cdot t(\lambda))})] + \frac{1}{2^n} \Pr_{\substack{\rho_x \leftarrow \mathcal{D}_\lambda(\mathcal{C})\\y\leftarrow \mathcal{B}(\rho_x^{\otimes (N \cdot t(\lambda))})}}[\dtr(\rho_x, \rho_y) > \epsilon(\lambda)]\\
    &\leq \sum_{x,y \in S_{\mathrm{good}}} \Pr[x \leftarrow \mathcal{D}_\lambda] \Pr_{\rho_x \leftarrow \mathcal{C}(x)}[y \leftarrow \mathcal{B}(\rho_x^{\otimes (N \cdot t(\lambda))})] + \frac{1}{2^n},\label{eq:good-prob-lower}
\end{align}
where in the third line, we use the definition of our construction of IV-OWSGs.
In the fifth line, we use the definition of our verification procedure, which we claim outputs $\top$ for $(x,y) \notin S_{\mathrm{good}}$ with probability at most $1/2^n$.
This is because by \Cref{thm:hypothesis-selection} and our choice of parameters, with probability at least $1-1/2^n$, hypothesis selection outputs some $x'$ such that $\dtr(\rho_x, \rho_{x'}) \leq \epsilon(\lambda)/2$.
Moreover, for $(x,y)\notin S_{\mathrm{good}}$, then $\dtr(\rho_x, \rho_y) > \epsilon(\lambda)$.
Then, by reverse triangle inequality,
\begin{equation}
    \dtr(\rho_y, \rho_{x'}) \geq |\dtr(\rho_y, \rho_x) - \dtr(\rho_x, \rho_{x'})| > \epsilon(\lambda)/2.
\end{equation}
Hence, because $\Ver(y, \rho_x^{\otimes N})$ outputs $\top$ iff $\dtr(\rho_y, \rho_{x'}) \leq \epsilon(\lambda)/2$, then in this case, the verification outputs $\bot$ unless hypothesis selection fails, which occurs with probability at most $1/2^n$.

Finally, writing the AHL security probability, we have, for a polynomial $t'(\lambda) = N \cdot t(\lambda)$, then
\begin{align}
	\Pr_{\substack{\rho_x \leftarrow \mathcal{D}_\lambda(\mathcal{C})\\y \leftarrow \mathcal{B}(\rho_x^{\otimes t'(\lambda)})}}\left[\dtr(\rho_x, \rho_y) \leq \epsilon(\lambda)\right] &= \sum_{x,y} \Pr[x \leftarrow \mathcal{D}_\lambda] \Pr_{\rho_x \leftarrow \mathcal{C}(x)}[y \leftarrow \mathcal{B}(\rho_x^{\otimes t'(\lambda)})] \Pr_{\substack{\rho_x \leftarrow \mathcal{C}(x)\\\rho_y\leftarrow \mathcal{C}(y)}}\left[\dtr(\rho_x, \rho_y) \leq \epsilon(\lambda)\right]\\
	&= \sum_{x,y \in S_{\mathrm{good}}} \Pr[x \leftarrow \mathcal{D}_\lambda] \Pr_{\rho_x \leftarrow \mathcal{C}(x)}[y\leftarrow \mathcal{B}(\rho_x^{\otimes t'(\lambda)})]\\
	&> \delta(\lambda),
\end{align}
which is a contradiction.
In the last line, we use \Cref{eq:good-prob-lower}.
Thus, we have proven that for all uniform QPT algorithms $\mathcal{A}$ and polynomials $t$, then
\begin{equation}
	\Pr_{\substack{x \leftarrow \KeyGen(1^\lambda)\\\rho_x^{\otimes N} \leftarrow \StateGen(x)}}[\top \leftarrow \Ver(\mathcal{A}(\rho_x^{\otimes (N \cdot t(\lambda))}), \rho_x^{\otimes N})] \leq \delta(\lambda) + \frac{1}{2^n}
\end{equation}
for an inverse-polynomial $\delta$.
This can be amplified to negligible security by Theorem 3.7 of~\cite{morimae2022one}.
\end{proof}

\section{IV-OWSG implies AHL}
\label{sec:owsg-to-ahl}

In this section we show that IV-OWSG implies AHL.
Intuitively, this claim makes sense: if one cannot invert the state (i.e., find its key), then it should also be hard to learn a classical description of it, which is precisely specified by the key.
Despite this intuition, we were not able to find a proof of this in the literature, and \cite{hiroka2024computational} only achieves a relationship between OWSGs and AHL in the mixed case for the different notion of SV-SI-OWSGs (see \Cref{def:sv-si-owsg}).
Their proof also uses the statistically-invertible property of these OWSGs crucially.
Instead, we provide a simple proof showing that IV-OWSG implies AHL, even in the mixed state case.

Before presenting the proof of our result, we note that an initial, weaker result can be obtained easily by stitching together results from the literature.

\begin{prop}
    Let $D > 0$ be a constant and let $\eta(\lambda) \geq (0.5 + D)\lambda$.
    If $2^{-\eta(\lambda)}$-secure IV-OWSGs exist, then AHL exists.
\end{prop}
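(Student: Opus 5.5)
The plan is to stitch together known results through EFI. First, invoke the result of~\cite{malavolta2024exponential} (Theorem 5.2 there): a $2^{-\eta(\lambda)}$-secure IV-OWSG with $\eta(\lambda) \geq (0.5 + D)\lambda$, for a constant $D>0$, yields EFI pairs. This is exactly the regime in the hypothesis of the proposition, so it gives EFI. Next, apply Theorem 7.7 of~\cite{morimae2022one}, which says that EFI pairs are equivalent to SV-SI-OWSGs (\Cref{def:sv-si-owsg}). This produces an SV-SI-OWSG $(\KeyGen, \StateGen)$.

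Finally, invoke the result of~\cite{hiroka2024computational} that SV-SI-OWSGs imply AHL. To make this step self-contained, I would take the class $\mathcal{C} = \{\rho_k\}$ of states output by $\StateGen$ and let $\mathcal{D}_\lambda$ be the output distribution of $\KeyGen$. Both are efficient, as \Cref{def:ahl} requires. Now fix any $\epsilon(\lambda) < 1 - \negl(\lambda)$, for instance $\epsilon = 1/2$. By statistical invertibility, $\dtr(\rho_k, \rho_y) \leq \epsilon$ can only hold when $y = k$. So any learner succeeding with probability $\delta$ in the AHL game outputs exactly $k$ with probability $\delta$. By computational non-invertibility, this forces $\delta \leq \negl(\lambda)$. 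In particular $\mathcal{C}$ is $(\epsilon, \delta)$-hard-to-learn with inverse-polynomial parameters.

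The only point needing care is the bookkeeping between the definitions. The first issue is matching the security regime and key-length conventions of~\cite{malavolta2024exponential} (security parameter $\lambda$ versus key length $n$) so that the threshold $(0.5+D)\lambda$ is exactly what their theorem needs. The second is that keys in the SV-SI-OWSG are $n$-bit strings, so the AHL class is indexed by $\{0,1\}^n$ as in \Cref{def:ahl}. I expect the main obstacle to be the first of these, namely the parameter matching in the IV-OWSG-to-EFI step, since that is where the exponential loss enters. The remaining steps are direct applications of the cited theorems.
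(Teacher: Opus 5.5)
Your proposal is correct and follows the paper's proof. The paper likewise chains Theorem 5.2 of \cite{malavolta2024exponential} (exponentially secure IV-OWSGs imply EFI) with Theorem 5.3 of \cite{hiroka2024computational} (EFI implies mixed-state AHL); your detour through SV-SI-OWSGs via Theorem 7.7 of \cite{morimae2022one}, where statistical invertibility forces any $\epsilon$-close output to equal $k$, simply unpacks how that second cited result is obtained, so it makes the argument more self-contained without changing the approach.
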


\begin{proof}
    This follows by combining Theorem 5.2 from~\cite{malavolta2024exponential} and Theorem 5.3 in~\cite{hiroka2024computational}.
    Namely, Theorem 5.2 in~\cite{malavolta2024exponential} proves that exponentially-secure IV-OWSGs imply EFI.
    Then, Theorem 5.3 in~\cite{hiroka2024computational} shows that EFI implies AHL for mixed states.
\end{proof}

This proves that IV-OWSG implies AHL, although with an exponential loss in the reduction.
In contrast, we prove that standard IV-OWSGs are sufficient to construct AHL, with no loss in the reduction.
The key intuition is that the $\Ver$ algorithm of the OWSG also accepts keys corresponding to states that are close in trace distance to the target state, just like AHL.

\begin{prop}[IV-OWSG $\Rightarrow$ AHL]
    If IV-OWSGs exist, then there exists a class of states that is average-case hard to learn (as in \Cref{def:ahl}).
\end{prop}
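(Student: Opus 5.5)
The plan is to use the IV-OWSG itself as the AHL instance. Let $(\KeyGen,\StateGen,\Ver)$ be an IV-OWSG. Take the class $\mathcal{C}=\{\rho_k\}$ to consist of the states output by $\StateGen(k)$, indexed by keys $k\in\{0,1\}^n$ (padding the register to $n=\mathrm{poly}(\lambda)$ qubits if needed). Take $\mathcal{D}_\lambda$ to be the output distribution of $\KeyGen(1^\lambda)$. Each $\rho_k$ is QPT-generatable, and $\mathcal{D}_\lambda$ is efficiently sampleable because $\KeyGen$ is QPT and outputs a classical string. I will show that $\mathcal{C}$ is $(\epsilon,\delta)$-hard-to-learn for, say, $\epsilon=1/4$ and any inverse-polynomial $\delta$. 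More precisely, every QPT learner succeeds with only negligible probability, which certainly meets the requirement $\epsilon,\delta=1/\mathrm{poly}(\lambda)$ of \Cref{def:ahl}.

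The key step is the observation that $\Ver$ also accepts keys of nearby states. Fix $k'$. Whatever the (possibly inefficient) procedure $\Ver(k',\cdot)$ is, it is a quantum operation applied to its input state followed by a binary outcome, so it is described by a two-outcome POVM $\{M_{k'},I-M_{k'}\}$ with $0\le M_{k'}\le I$. For any two states $\rho,\sigma$, we have $|\tr(M\rho)-\tr(M\sigma)|\le \dtr(\rho,\sigma)$. Hence, if $\dtr(\rho_k,\rho_{k'})\le\epsilon$, correctness of the OWSG applied to the key $k'$ gives
\begin{equation}
\Pr[\top\leftarrow\Ver(k',\rho_k)]\ \ge\ \Pr[\top\leftarrow\Ver(k',\rho_{k'})]-\epsilon\ \ge\ 1-\negl(\lambda)-\epsilon.
\end{equation}
Here it is essential that correctness holds for every key, not just on average over $\KeyGen$; this is why the paper's version of \Cref{def:owsg} requires correctness for all keys.

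Now I argue by contradiction. Suppose a uniform QPT learner $\mathcal{A}$ and a polynomial $t$ satisfy
\begin{equation}
\Pr_{k\leftarrow\mathcal{D}_\lambda,\; y\leftarrow\mathcal{A}(\rho_k^{\otimes t})}\bigl[\dtr(\rho_k,\rho_y)\le\epsilon\bigr]>\delta(\lambda)
\end{equation}
for infinitely many $\lambda$. I use the same $\mathcal{A}$ as an OWSG inverter: given $\rho_k^{\otimes t}$, it outputs $y$. Splitting the inverter's success probability according to whether $(k,y)$ lies in the set of pairs with $\dtr(\rho_k,\rho_y)\le\epsilon$ (as in the analogous decomposition in the proof of \Cref{prop:ahl-to-iv-owsg}), and dropping the other term, gives
\begin{equation}
\Pr[\top\leftarrow\Ver(\mathcal{A}(\rho_k^{\otimes t}),\rho_k)]\ \ge\ \delta(\lambda)\bigl(1-\epsilon-\negl(\lambda)\bigr)\ \ge\ \delta(\lambda)/2.
\end{equation}
This is non-negligible, contradicting the security of the IV-OWSG. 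One independence point needs checking: the verifier's copy of $\rho_k$ is a fresh copy, independent of the $t$ copies given to $\mathcal{A}$. So conditioned on $(k,y)$, the verifier's acceptance probability is exactly $\tr(M_y\rho_k)$, and the bound applies pointwise.

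I expect the main obstacle to be the definitional bookkeeping rather than any deep step. First, if one uses a variant of the IV-OWSG definition in which $\Ver$ receives several copies $\rho_k^{\otimes N}$ (as in the construction of \Cref{prop:ahl-to-iv-owsg}), the closeness bound becomes $\dtr(\rho_k^{\otimes N},\rho_{k'}^{\otimes N})\le N\epsilon$. In that case I would instead take $\epsilon=1/(4N)$, which is still inverse-polynomial. Second, for states indexed by strings outside the support of $\KeyGen$, the AHL definition still only asks for proper outputs; these are harmless, because correctness holds for all keys. With these points checked, the reduction is uniform and loses only a factor of $2$, giving the proposition with no exponential loss.
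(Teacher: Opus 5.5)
Your proposal is correct and follows essentially the same route as the paper: use $(\KeyGen,\StateGen)$ directly as the AHL instance, note via $|\tr(M\rho)-\tr(M\sigma)|\le\dtr(\rho,\sigma)$ and all-key correctness that $\Ver(y,\rho_x)$ accepts with probability at least $1-\negl(\lambda)-\epsilon$ whenever $\dtr(\rho_x,\rho_y)\le\epsilon$, and restrict to the good pairs to obtain inversion probability at least $\delta(\lambda)(1-\epsilon-\negl(\lambda))$. Your fixed choice $\epsilon=1/4$ and your explicit check that the verifier's copy is independent of the learner's copies are somewhat tidier than the paper's quantification over all inverse-polynomial $\epsilon,\delta$, but the argument is the same.
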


\begin{proof}
Suppose there exists an IV-OWSG specified by algorithms $(\KeyGen, \StateGen, \Ver)$.
We construct a class of states that is hard to learn on average from this.
This requires constructing a distribution $\mathcal{D}_\lambda$ and a class of states $\mathcal{C}$.
We simply define $\mathcal{D}_\lambda$ as sampling from the same distribution as $\KeyGen(1^\lambda)$ and $\mathcal{C}$ as the same set of states output by $\StateGen$.
We want to show that this class is indeed an AHL instance, i.e., there exist $\epsilon(\lambda),\delta(\lambda) = 1/\mathrm{poly}(\lambda)$ such that for all polynomials $t$ and uniform QPT algorithms $\mathcal{A}$,
\begin{equation}
  \Pr_{\substack{\rho_x \leftarrow \mathcal{D}_\lambda(\mathcal{C})\\y \leftarrow \mathcal{A}( \rho_x^{\otimes t(\lambda)})}}\left[\dtr(\rho_x, \rho_y)\leq \epsilon(\lambda)\right] \leq \delta(\lambda).
\end{equation}
Suppose for the sake of contradiction that for all inverse polynomial $\epsilon,\delta$, there exists a polynomial $t$ and QPT learner $\mathcal{B}$ such that
\begin{equation}
  \Pr_{\substack{\rho_x \leftarrow \mathcal{D}_\lambda(\mathcal{C})\\y \leftarrow \mathcal{B}( \rho_x^{\otimes t(\lambda)})}}\left[\dtr(\rho_x, \rho_y)\leq \epsilon(\lambda)\right] > \delta(\lambda).
\end{equation}
To arrive at a contradiction, we will show that $\mathcal{B}$ can also break the security of the OWSG.
Define the set
\begin{equation}
  S_{\mathrm{good}} \triangleq \{(x,y) \in \mathcal{X}^2 : \dtr(\rho_x, \rho_y) \leq \epsilon(\lambda), \; \rho_x \leftarrow \mathcal{C}(x), \rho_y \leftarrow \mathcal{C}(y)\}.
\end{equation}
Notice that by our assumption on the adversary $\mathcal{B}$ we have
\begin{align}
  &\Pr_{\substack{\rho_x \leftarrow \mathcal{D}_\lambda(\mathcal{C})\\y \leftarrow \mathcal{B}(\rho_x^{\otimes t(\lambda)})}}[\dtr(\rho_x, \rho_y) \leq \epsilon(\lambda)]\\
    &= \sum_{(x,y) \in \mathcal{X}^2}\Pr[x \leftarrow \mathcal{D}_\lambda] \Pr_{\substack{\rho_x \leftarrow \mathcal{C}(x)}}[y \leftarrow \mathcal{B}(\rho_x^{\otimes t(\lambda)})]\Pr_{\substack{\rho_x \leftarrow \mathcal{C}(x)\\\rho_y \leftarrow \mathcal{C}(y)}}[\dtr(\rho_x, \rho_y)\leq \epsilon(\lambda)]\\
    &= \sum_{(x,y) \in S_{\mathrm{good}}} \Pr[x \leftarrow \mathcal{D}_\lambda] \Pr_{\substack{\rho_x \leftarrow \mathcal{C}(x)}}[y \leftarrow \mathcal{B}(\rho_x^{\otimes t(\lambda)})]\\
    &> \delta(\lambda).\label{eq:contra3}
\end{align}
Now, we can lower bound the probability of verification passing for the OWSG using $\mathcal{B}$'s output:
\begin{align}
  &\Pr_{\substack{x \leftarrow \KeyGen(1^\lambda)\\\rho_x \leftarrow \StateGen(x)}}[\top \leftarrow \Ver(\mathcal{B}(\rho_x^{\otimes t(\lambda)}), \rho_x)]\\
    &= \sum_{(x,y) \in \mathcal{X}^2} \Pr[x \leftarrow \KeyGen(1^\lambda)] \Pr_{\substack{\rho_x \leftarrow \StateGen(x)}}[y \leftarrow \mathcal{B}(\rho_x^{\otimes t(\lambda)})]\Pr_{\rho_x \leftarrow \StateGen(x)}[\top \leftarrow \Ver(y, \rho_x)]\\
    &\geq \sum_{(x,y) \in S_{\mathrm{good}}} \Pr[x \leftarrow \KeyGen(1^\lambda)] \Pr_{\substack{\rho_x \leftarrow \StateGen(x)}}[y \leftarrow \mathcal{B}(\rho_x^{\otimes t(\lambda)})]\Pr_{\rho_x \leftarrow \StateGen(x)}[\top \leftarrow \Ver(y, \rho_x)]\\
    &\begin{aligned}
        \geq \sum_{(x,y) \in S_{\mathrm{good}}} \Pr[x \leftarrow \KeyGen(1^\lambda)] \Pr_{\substack{\rho_x \leftarrow \StateGen(x)}}&[y \leftarrow \mathcal{B}(\rho_x^{\otimes t(\lambda)})]\\
        &\cdot \left(\Pr_{\rho_y \leftarrow \StateGen(y)}[\top \leftarrow \Ver(y, \rho_y)] - \epsilon(\lambda)\right)
    \end{aligned}\\
    &> \delta(\lambda)\left(1 - \negl(\lambda) - \epsilon(\lambda)\right).
\end{align}
In the fourth line, we use that, by definition of $S_{\mathrm{good}}$, then for all $(x,y) \in S_{\mathrm{good}}$, $\dtr(\rho_x, \rho_y) \leq \epsilon(\lambda)$.
Thus, $|\tr(\Pi\rho_x) - \tr(\Pi\rho_y)| \leq \epsilon(\lambda)$ for any POVM $\Pi$.
In general, the verification procedure $\Ver(y, \rho_x)$ performs a (computationally inefficient) POVM on $\rho_x$, so the probability of it outputting $\top$ is given by $\tr(\Pi \rho_x)$.
In the last inequality, we used that $\Pr[\top \leftarrow \Ver(x,\rho_x)] \geq 1 -\negl(\lambda)$ for all $x$ (see our definition of OWSGs in \Cref{def:owsg}) and \Cref{eq:contra3} with our definition of the AHL instance.
Because $\delta, \epsilon$ are inverse polynomial, this is a contradiction to the security of the IV-OWSG.
\end{proof}

\section{Separation Between IV-OWSG and OWSG}
\label{sec:separation}

In this section, we stitch together several results in the literature to show an oracle separation between IV-OWSG and OWSG.
For our purposes, it is important that these results hold against uniform adversaries.
This separation holds relative to the SWAP oracle~\cite{goldin2025translating}, which we define as follows.

\begin{definition}[Oracle models; Definition 3.1 of~\cite{goldin2025translating}]
    \label{def:oracle}
    Let $\mathcal{D} = \{\mathcal{D}_n\}_{n \in \mathbb{N}}$ be a family of distributions over pure states on $n$-qubits, which are elements of $\mathbb{C}^{2^n}$.
    Let $\ket{\phi_n}$ be a state sampled from $\mathcal{D}$ during initialization, before any party receives oracle access.
    Consider the space $\mathbb{C}^{2^n+1}$ which will be the space spanned by $\mathbb{C}^{2^n}$ of $\mathcal{D}_n$ and an orthogonal basis vector $\ket{0}$.
    \begin{itemize}
        \item Define the \emph{common Haar random state model} over $\mathcal{D}$ to be the oracle $\mathrm{CHRS}_n: \mathbb{C} \to \mathbb{C}^{2^n}$ which acts as
        \begin{equation}
            \ket{0} \mapsto \ket{\phi_n}.
        \end{equation}
        \item Define the \emph{swap structured state model (SWAP)} over $\mathcal{D}$ to be the oracle $\mathrm{SWAP}_n: \mathbb{C}^{2^n+1} \to \mathbb{C}^{2^n+1}$ which acts as
        \begin{equation}
            \mathrm{SWAP}_n = I - \ketbra{0} - \ketbra{\phi_n} + \ketbra{0}{\phi_n} + \ketbra{1}{0}.
        \end{equation}
    \end{itemize}
\end{definition}

First, we state a result from~\cite{goldin2025translating}.

\begin{theorem}[Corollary C.7 of~\cite{goldin2025translating}]
    There does not exist a OWSG relative to SWAP.
\end{theorem}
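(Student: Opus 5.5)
The statement is Corollary C.7 of \cite{goldin2025translating}: no OWSG exists relative to SWAP. I plan to reconstruct its proof by \emph{reduction to the classical-oracle world}, in three steps. First, use the SWAP-to-classical-oracle translation of \cite{goldin2025translating}. Second, break the OWSG in the resulting inefficient-but-query-bounded setting. Third, show that the breaking adversary is efficient relative to SWAP.

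The first step is to show that an adversary with polynomially many copies of the challenge state and SWAP access can simulate $\StateGen$ and $\Ver$ approximately. The simulation uses polynomially many copies of $\ket{\phi_n}$, obtained by querying $\mathrm{SWAP}_n$ on $\ket{0}$. It also needs approximate reflections about $\ket{\phi_n}$, which can be built from copies together with the swap structure (as in \cite{goldin2025translating}). Thus each OWSG algorithm making $q$ oracle queries is replaced, up to $1/\mathrm{poly}$ error, by a \emph{non-oracle} algorithm that takes $\mathrm{poly}(q)$ copies of $\ket{\phi_n}$ as auxiliary input. This turns the OWSG relative to SWAP into a OWSG with a common quantum reference state, whose verification is still efficient in the copy-consuming sense.

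The second step is to invert this OWSG. The key fact is that efficient verification lets the adversary \emph{test} candidate keys. Following the usual argument that OWSGs do not exist when $\mathsf{BQP}=\mathsf{PP}$-type inversion is available (e.g.\ via shadow tomography on the POVMs $\{\Ver(k',\cdot)\}_{k'}$), the adversary does the following. It takes polynomially many copies of $\rho_k$ and of $\ket{\phi_n}$. It then runs a threshold search over keys $k'$ to find one on which $\Ver(k',\rho_k)$ accepts with probability close to that of the maximising key. Correctness guarantees that the maximum is at least $1-\negl$, so the key found is accepted with noticeable probability. This contradicts security.

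The third step, which I expect to be the main obstacle, is making this search efficient relative to SWAP. The step is exactly where OWSGs differ from IV-OWSGs: IV-OWSGs are constructed relative to SWAP in the next result, so no argument that works without efficient $\Ver$ can succeed. The search is efficient only given a $\mathsf{PP}$-like classical oracle, which SWAP does not supply. I would therefore follow \cite{goldin2025translating}. Their model's no-go rests on the observation that an oracle-free breaking procedure, once relativised to a classical $\mathsf{PSPACE}$/$\mathsf{PP}$ oracle, lifts to an attack in the SWAP model, given that the uniform unitary-synthesis assumptions used there hold. I would thus check that the adversary obtained in the second step uses only copies and reflections about $\ket{\phi_n}$, plus classical computation, and then cite Corollary C.7 for the relativised inefficiency bookkeeping. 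I expect the uniformity of the adversary to require particular care.
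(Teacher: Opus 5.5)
The paper gives no argument for this statement; it imports it verbatim from Goldin--Zhandry. Read as a reconstruction, your proposal has a genuine gap at Step~3, and you close it circularly. The one step you call the main obstacle is discharged by citing Corollary~C.7, which is the statement itself. If you are content to cite it, that citation is the whole proof, as in the paper, and Steps~1--2 play no role. The mechanism you attribute to Goldin--Zhandry is a classical PSPACE/PP-relativised attack ``lifting'' to SWAP under ``uniform unitary-synthesis assumptions''. Nothing in the paper establishes it, and a version conditional on such assumptions would not give the unconditional claim.

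More importantly, Step~3 aims at the wrong target. A time-efficient uniform attack using SWAP access alone cannot be proved unconditionally. Such an attack would also break oracle-independent OWSGs, e.g.\ ones built from a quantum-secure one-way function. By your own Step~1, SWAP access can be simulated from copies of $\phi_n$, and copies of a Haar-random state (the maximally mixed state on the symmetric subspace) are efficiently preparable. So the attack would refute quantum-secure one-way functions. The no-go therefore has to be read with adversaries bounded in SWAP queries and challenge copies, or with an extra computational oracle added to the model.

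In the query-bounded reading, what separates OWSGs from IV-OWSGs is that $\Ver$ makes only polynomially many SWAP queries. After translation, each $\Ver(k',\cdot)$ is a fixed POVM on $\rho_k\otimes\phi_n^{\otimes m}$ with $m=\mathrm{poly}(\lambda)$. Threshold search over all $2^n$ keys then uses $\mathrm{poly}(n,1/\epsilon)$ copies of $\rho_k$ and polynomially many SWAP queries, and its exponential running time is irrelevant. So Steps~1--2 essentially finish the proof. An IV-OWSG escapes because its $\Ver$ (e.g.\ a Helstrom measurement for the EFI states) is not such a POVM.

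What genuinely needs care is applying the translation to the whole security experiment. The state $\rho_k$ is produced with real SWAP queries and is correlated with $\phi_n$. You cannot simply substitute a simulated $\Ver$ unless your simulation is accurate on arbitrary $\phi_n$-dependent inputs. The resulting $1/\mathrm{poly}$ error is then absorbed into a correctness of $1-1/\mathrm{poly}$, which threshold search tolerates.

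If you insist on time-efficient attackers, the model must include something able to run this polynomial-space, exponential-time search on quantum data. A classical PP or PSPACE oracle cannot access $\phi_n$, and your sketch gives no reason it would suffice.
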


Now, we show the following theorem, which establishes the oracle separation.

\begin{theorem}
    IV-OWSG exist relative to SWAP.
\end{theorem}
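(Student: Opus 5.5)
The plan is to follow the route already sketched in the introduction. First build an IV-OWSG in the common Haar random state (CHRS) model from statistically secure single-copy pseudorandom states. Then transport it to the SWAP model using the lifting theorem of~\cite{goldin2025translating}.

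\emph{Step 1: a primitive in the CHRS model.} I would invoke the result of~\cite{chen2025power}: relative to $\mathrm{CHRS}$ there exist single-copy pseudorandom states (1PRS) that are statistically secure against adversaries making polynomially many oracle queries. Moreover, these 1PRS can be taken with output length larger than the key length, e.g.\ $\ket{\psi_k}$ on $m > |k|$ qubits built from $\ket{\phi_n}$.

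From 1PRS with such stretch one obtains EFI, or directly an IV-OWSG, by a standard argument. Set $\KeyGen$ to sample a uniform $k$, set $\StateGen(k)=\psi_k$, and let $\Ver$ be the inefficient verifier that checks consistency of $k'$ with the given copy of $\psi_k$. Security then reduces to single-copy pseudorandomness. Any inverter given $\psi_k^{\otimes t}$ would also succeed against the Haar-random state, which is impossible by counting: a state in a Haar-random subspace of dimension $2^m \gg 2^{|k|}$ is far from every $\psi_{k'}$.

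To keep the adversary's copies consistent with the single-copy hypothesis, I would use the 1PRS$\Rightarrow$EFI$\Rightarrow$IV-OWSG chain from the literature rather than a hand-rolled argument. Concretely, 1PRS imply EFI pairs (statistically far, computationally indistinguishable) by comparing the mixture over keys against the maximally mixed state. EFI pairs imply SV-SI-OWSG by Theorem~7.7 of~\cite{morimae2022one}. Finally, an SV-SI-OWSG gives an IV-OWSG because an inefficient verifier can test, via the optimal Helstrom measurement, whether a copy is $\rho_{k'}$; statistical invertibility makes this accurate. Each implication relativizes and holds against uniform query-bounded adversaries.

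\emph{Step 2: lifting to SWAP.} Apply the translation theorem of~\cite{goldin2025translating}, which maps primitives secure in the CHRS model to the SWAP model while preserving correctness and security against uniform QPT oracle adversaries. The point to check is that the lifting applies to primitives whose verification is inefficient. I expect this to be fine because the lifting concerns the honest efficient algorithms ($\KeyGen$, $\StateGen$), and the adversary's query access, whereas $\Ver$ is only an unbounded test on the output state.

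\emph{Main obstacle.} The delicate step is verifying that the hypotheses of the lifting theorem, namely uniformity, the kind of security game allowed, and inefficient challengers, cover the IV-OWSG game. I would first try to lift EFI itself, whose game has only efficient components, and then derive IV-OWSG inside the SWAP world. That second derivation is purely information-theoretic plus a relativizing reduction, so it holds relative to any oracle. Combined with Corollary~C.7 of~\cite{goldin2025translating}, this yields the separation.
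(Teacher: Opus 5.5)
Your ingredients are the right ones: statistically secure 1PRS in the CHRS model \cite{chen2025power}, the Goldin--Zhandry lifting, and a relativizing chain down to IV-OWSG. But the last link of your chain is not valid as argued. You claim an SV-SI-OWSG becomes an IV-OWSG by letting $\Ver(k',\cdot)$ run a Helstrom measurement testing whether the copy is $\rho_{k'}$, with statistical invertibility making this accurate. A Helstrom measurement is a binary test between two specified states. Here $\Ver(k',\cdot)$ must accept $\rho_{k'}$ and reject all $2^n-1$ other states simultaneously. Combining pairwise tests by a union bound loses a factor $2^n$, which a merely negligible invertibility error does not absorb.

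More fundamentally, consider the adversary that ignores its input and always outputs a fixed $k'$. It wins with probability $\tr(\Pi_{k'}\bar\rho)$, where $\Pi_{k'}$ is the accepting POVM element and $\bar\rho=\sum_k\Pr[k]\,\rho_k$. So security needs each $\rho_{k'}$ to be distinguishable from the average of the other states, and pairwise farness does not give this. For instance, add to a pure SV-SI-OWSG one extra key whose state is the uniform mixture of all the others. Each pure state is still at trace distance $1-\negl(\lambda)$ from that mixture, so the enlarged family is still an SV-SI-OWSG. Yet any verifier that is correct on the extra key is broken by the adversary that always outputs it.

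So the generic step fails unless you exploit extra structure, which you do not invoke. An example of such structure is a product form $\rho_k=\bigotimes_i\sigma_{k_i}$, where bitwise Helstrom decoding needs a union bound over only $n$ positions. The implication EFI $\Rightarrow$ IV-OWSG is nevertheless true: it is Theorem~4.3 of \cite{malavolta2024exponential}, which is what the paper cites. Your direct 1PRS $\Rightarrow$ EFI argument is fine (mixture over keys versus the maximally mixed state, with parallel repetition if the stretch is small). It can replace the paper's detour through commitments \cite{morimae2022quantum,brakerski2022computational}.

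Second, the ``main obstacle'' you flag comes from lifting the wrong object. The lifting of \cite{goldin2025translating} holds under specific conditions rather than for arbitrary primitives. The paper applies it (Corollary~8.4) only to the statistically secure 1PRS, obtaining 1PRS relative to SWAP. It then runs all subsequent implications, which are relativizing, inside the SWAP world. You already observe that your chain relativizes, so you should lift the 1PRS rather than the EFI pair or the IV-OWSG. Then no question about inefficient challengers or the form of the security game arises. With that change and the citation fix above, your argument becomes the paper's proof.
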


To show this, we state two results, which, when combined, give us the above statement.
Notably, all of these results hold for uniform adversaries.

\begin{theorem}[Theorem 4.1 of~\cite{chen2025power}]
    Statistically-secure 1PRS exist relative to SWAP.
\end{theorem}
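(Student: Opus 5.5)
The plan is to prove the statement in two stages, matching how the introduction describes the construction. First, build a statistically secure single-copy pseudorandom state generator (1PRS) in the common Haar random state (CHRS) model of \Cref{def:oracle}. Second, transfer it to the SWAP model using the translation result of~\cite{goldin2025translating}. I expect the first stage to be a direct Haar-moment calculation and the second to be the delicate part, since a SWAP oracle gives the adversary strictly more power than CHRS access.

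\textbf{Stage 1 (1PRS in the CHRS model).} Let $\ket{\phi_n}$ be the common Haar state on $n$ qubits. On key $k=(a,b)\in\{0,1\}^{2n}$, the generator queries CHRS once and outputs the Pauli one-time-padded state $X^aZ^b\ket{\phi_n}$. Security should hold against an adversary with unbounded computation and at most $t=\mathrm{poly}(\lambda)$ queries, which can be taken to be $t$ copies of $\ket{\phi_n}$. So I will compare the two joint states
\begin{equation}
\mathbb{E}_{\phi}\,\mathbb{E}_{a,b}\left[(X^aZ^b\phi Z^bX^a)\otimes\phi^{\otimes t}\right] \quad\text{and}\quad \mathbb{E}_{\phi,\psi}\left[\psi\otimes\phi^{\otimes t}\right],
\end{equation}
where $\psi$ is an independent Haar state. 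The Haar average is handled with $\mathbb{E}_\phi[\phi^{\otimes(t+1)}]=\Pi_{\mathrm{sym}}^{(t+1)}/\binom{d+t}{t}$, where $d=2^n$ and $\Pi_{\mathrm{sym}}^{(t+1)}$ is the projector onto the symmetric subspace. Expanding $\Pi_{\mathrm{sym}}^{(t+1)}$ as an average of permutation operators, the Pauli twirl on register $0$ leaves the terms that do not move register $0$ essentially unchanged. It sends each term involving a transposition $(0\,j)$ to $\frac{1}{d}I_0\otimes(\cdot)$, using the fact that twirling $\mathrm{SWAP}_{0j}$ over the Pauli group gives $\frac{1}{d}I\otimes I$. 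This should make the padded state equal to $\frac{I}{d}\otimes\mathbb{E}[\phi^{\otimes t}]$ up to a correction of trace norm $O(t^2/d)$, which is negligible for $n=\omega(\log\lambda)$. The Haar side satisfies the same estimate, so the triangle inequality gives statistical single-copy pseudorandomness.

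\textbf{Stage 2 (lifting to SWAP).} The goal here is to invoke the translation theorem of~\cite{goldin2025translating}. Roughly, it says that any primitive that is secure in the CHRS model against unbounded adversaries making polynomially many queries, with a security game that does not itself need the swap or reflection functionality, remains secure when the oracle is replaced by $\mathrm{SWAP}_n$. One must also check that honest algorithms can still be implemented: the generator obtains $\ket{\phi_n}$ by applying $\mathrm{SWAP}_n^\dagger$ to $\ket{0}$, so Stage 1 carries over. I also need to verify that the lifting preserves statistical security against \emph{uniform} adversaries, as the section insists.

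\textbf{Main obstacle.} I expect the hard step to be checking that the 1PRS security game fits the hypotheses of the lifting theorem. In particular, an adversary with SWAP access can reflect about $\ket{\phi_n}$ and hence apply $\ket{\phi_n}\bra{\phi_n}$-type projections to the challenge state. If the theorem does not apply directly, my fallback is a hybrid argument. In each query I would replace $\mathrm{SWAP}_n$ by a simulation built from $\mathrm{poly}$ many copies of $\ket{\phi_n}$, using a sample-based reflection in the style of density-matrix exponentiation, so that the total simulation error stays small. I would then rerun the Stage 1 moment bound with the enlarged polynomial number of copies.
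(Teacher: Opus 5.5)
\textbf{The gap is in Stage~1: your generator is not a 1PRS.} Your two-stage architecture does match the paper. Its proof cites Theorem~4.1 of~\cite{chen2025power} for a statistically secure 1PRS in the CHRS model, then applies Corollary~8.4 of~\cite{goldin2025translating} to move it to the SWAP model. The problem is the generator $(a,b)\mapsto X^aZ^b\ket{\phi_n}$.

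A single-copy pseudorandom state generator must output strictly more qubits than it has key bits. Without this stretch the notion is vacuous: already $k\mapsto\ket{k}$ would be perfectly secure with no oracle. The stretch is also what the commitment construction of~\cite{morimae2022quantum}, used in the paper's next theorem, relies on for binding. Binding there comes from the gap between $2^{-|k|}\sum_k\phi_k$, which has rank at most $2^{|k|}$, and the maximally mixed state.

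Your generator spends $2n$ key bits on $n$ output qubits, so it is the quantum one-time pad. Its security has nothing to do with the Haar state. The Pauli twirl on register $0$ is the completely depolarizing channel, so for every operator $\Phi$ on register $0$ together with any other registers,
\[
\frac{1}{d^2}\sum_{a,b}(X^aZ^b\otimes I)\,\Phi\,(X^aZ^b\otimes I)^\dagger=\frac{I}{d}\otimes\tr_0\Phi .
\]
This holds for each fixed $\phi_n$, and equally with $\ket{0^n}$ in place of $\ket{\phi_n}$. Consequences:
\begin{itemize}
\item Your correction term is exactly $0$, not $O(t^2/d)$. Incidentally, the normalization of $\Pi_{\mathrm{sym}}^{(t+1)}$ is $\binom{d+t}{t+1}$, not $\binom{d+t}{t}$.
\item The Stage~2 obstacle you anticipate never arises. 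The challenge register is maximally mixed and independent of the oracle, however the adversary uses $\mathrm{SWAP}_n$.
\end{itemize}
That the argument never needs the oracle is the sign that the object is trivial.

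\textbf{What is missing is the actual content of~\cite{chen2025power}.} That is a generator whose key is shorter than its output. For each fixed Haar state the key-average is then far from maximally mixed, and an adversary who knew $\ket{\phi_n}$ would win. Security holds only because polynomially many queries do not reveal $\ket{\phi_n}$.

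In your moment calculation, this means the pad cannot be the full Pauli group. The twirl of $\mathrm{SWAP}_{0j}$ is then no longer $\frac{1}{d}I$, and the surviving non-identity terms in the permutation expansion must genuinely be bounded. Either cite Theorem~4.1 of~\cite{chen2025power} for this, as the paper does, or carry out that bound for a construction with stretch.

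\textbf{On Stage~2.} Invoking Corollary~8.4 of~\cite{goldin2025translating} is correct, but your fallback would not replace it. A sample-based reflection gives only $I-2\ket{\phi_n}\bra{\phi_n}$. Like the reflection about $\ket{0}$, it is diagonal on $\mathrm{span}\{\ket{0},\ket{\phi_n}\}$, whereas $\mathrm{SWAP}_n$ coherently exchanges these two orthogonal vectors.
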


\begin{proof}
    By Theorem 4.1 of~\cite{chen2025power}, statistically-secure 1PRS exist in the common Haar random state model.
    By applying Corollary 8.4 of~\cite{goldin2025translating}, this implies a construction of statistically-secure 1PRS in the SWAP model.
\end{proof}

\begin{theorem}
    If 1PRS exist, then IV-OWSG exists.
\end{theorem}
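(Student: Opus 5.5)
A single-copy pseudorandom state generator (1PRS) consists of a QPT algorithm $\StateGen(k)$ with $k\in\{0,1\}^\lambda$, outputting a mixed state $\rho_k$ on $n$ qubits with $n \geq \lambda+1$ (the stretch is what makes 1PRS nontrivial). Its security is that a single copy of $\rho_k$, for uniform $k$, is computationally indistinguishable from the maximally mixed state $I/2^n$. Our plan is to derive an IV-OWSG from this in two stages. First, take $\KeyGen$ uniform and $\StateGen'(k)=\rho_k^{\otimes L}$ for a suitable polynomial $L$, with an inefficient $\Ver$ in the style of \Cref{prop:ahl-to-iv-owsg}. Then argue one-wayness by reduction to single-copy pseudorandomness, in the style of the EFI/1PRS arguments of~\cite{morimae2022one,malavolta2024exponential}.

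\textbf{Step 1: construction and correctness.} Take $\Ver(k',\sigma)$ to be the inefficient operation that runs the hypothesis selection algorithm of \Cref{thm:hypothesis-selection} on $\sigma=\rho_k^{\otimes L}$ with hypothesis class $\{\rho_{k''}\}_{k''}$, accuracy $\epsilon$ and failure probability $2^{-\lambda}$. It obtains some $k''$ and accepts iff $\dtr(\rho_{k'},\rho_{k''})\le\epsilon$. Correctness holds exactly as in \Cref{prop:ahl-to-iv-owsg}, and $L=N(\epsilon,2^{-\lambda})=\mathrm{poly}(\lambda)$ because $m\le 2^\lambda$.

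\textbf{Step 2: security via a hybrid.} Suppose an adversary $\mathcal{A}$, given $t$ copies, outputs an accepted $k'$ with non-negligible probability $p$. We would build a distinguisher for many copies, then reduce to one copy. The many-copy distinguisher receives $T=tL+L$ copies of an unknown state $\tau$, runs $\mathcal{A}$ on the first $tL$ copies to get $k'$, and then tests whether $k'$ is consistent with $\tau$. The obstacle is that this consistency test is inefficient. Our fix is to replace it with an efficient \emph{generation} test: run $\StateGen(k')$ to prepare $\rho_{k'}$ and perform a SWAP test against a fresh copy of $\tau$. Whenever $\rho_{k'}$ is $\epsilon$-close to $\rho_k$, this SWAP test is biased towards acceptance relative to the case $\tau=I/2^n$, where it accepts with probability $\tfrac12(1+2^{-n})$. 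The bias is $\tfrac12\tr(\rho_{k'}\rho_k)$, which may be tiny for highly mixed states, and this is the step we expect to be the main obstacle.

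Our alternative handles the obstacle by an information-theoretic counting argument rather than the SWAP test. When $\tau=I/2^n$, the adversary's output $k'$ is independent of any key. Hence the probability that the inefficient $\Ver$ accepts $k'$ on $(I/2^n)^{\otimes L}$ is controlled by how many keys are $\epsilon$-close to the maximally mixed state. Here we use that statistically secure 1PRS gives $\rho_k$ with $\dtr(\rho_k,I/2^n)$ small on average, which is the point where statistical security of the instantiation from~\cite{chen2025power} matters. The difficulty is that a statistically secure 1PRS makes $\rho_k$ close to maximally mixed, so $\epsilon$-closeness says little. To get nontrivial one-wayness, we would therefore use the many-copy state $\rho_k^{\otimes L}$ and choose $\epsilon$ relative to the $L$-copy trace distance. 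We would then argue via a counting bound over the $2^\lambda$ keys that a uniformly random key, given only $t$ copies, cannot be guessed up to $\epsilon$-closeness except with probability $\delta$. Finally, we would amplify to negligible security using Theorem 3.7 of~\cite{morimae2022one}. We expect tuning $L$, $t$ and $\epsilon$ to be the crux, and would first try $\epsilon$ constant with $L$ large enough that distinct keys' $L$-fold states separate.
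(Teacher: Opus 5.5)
There is a genuine gap: Step~2 never establishes security, and the obstruction is structural, not a matter of tuning $L$, $t$, $\epsilon$.

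\textbf{Why your security argument cannot be completed.} In your construction the 1PRS key $k$ is itself the OWSG key. The adversary therefore holds $tL$ copies of $\rho_k$ for one fixed $k$, whereas 1PRS security only concerns a \emph{single} copy for a uniformly random key. A reduction that receives one challenge copy cannot produce the other copies without knowing $k$. Both of your routes silently assume multi-copy pseudorandomness, which a 1PRS does not provide:
\begin{itemize}
\item the SWAP-test distinguisher on $T=tL+L$ copies of $\tau$;
\item the switch from $\rho_k^{\otimes T}$ to $(I/2^n)^{\otimes T}$ in your counting argument.
\end{itemize}
The counting argument also misreads statistical security. Statistical security bounds $\dtr(\mathbb{E}_k\rho_k, I/2^n)$, not $\mathbb{E}_k\dtr(\rho_k, I/2^n)$; for pure outputs the latter equals $1-2^{-n}$. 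In any case, the theorem is about computationally secure 1PRS.

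Your definition also allows mixed outputs. Under it, $\rho_k=I/2^n$ for all $k$ is a perfectly secure 1PRS. But an unbounded adversary breaks every IV-OWSG: hypothesis selection finds an $\epsilon$-close key, which $\Ver$ then accepts with probability at least $1-\epsilon-\negl(\lambda)$, exactly as in the IV-OWSG $\Rightarrow$ AHL argument. So deriving IV-OWSGs from your hypothesis would amount to an unconditional complexity separation. The definition of~\cite{morimae2022quantum} has pure outputs $\ket{\phi_k}$, and purity together with the stretch $n\ge\lambda+1$ is exactly the ingredient your argument never uses.

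\textbf{The missing idea.} Treat the 1PRS key as internal randomness, freshly sampled in every copy, and obtain statistical distance from a rank bound. Let $\sigma_0=\mathbb{E}_k\ketbra{\phi_k}$ and $\sigma_1=I/2^n$.
\begin{itemize}
\item Single-copy security says precisely that $\sigma_0$ and $\sigma_1$ are computationally indistinguishable.
\item Both are efficiently preparable, so $\sigma_0^{\otimes r}$ and $\sigma_1^{\otimes r}$ remain indistinguishable by a standard hybrid.
\item $\sigma_0^{\otimes r}$ has rank at most $2^{\lambda r}$, so projecting onto its support gives $\dtr(\sigma_0^{\otimes r},\sigma_1^{\otimes r})\ge 1-2^{(\lambda-n)r}\ge 1-2^{-r}$.
\end{itemize}
This is an EFI pair, and EFI yields an IV-OWSG (Theorem~4.3 of~\cite{malavolta2024exponential}). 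For instance, take keys $x\in\{0,1\}^\lambda$ and $\StateGen(x)=\bigotimes_i\sigma_{x_i}^{\otimes r}$. Let $\Ver$ apply the (inefficient) Helstrom measurement to each block and compare the result with its input key. Because the blocks are almost orthogonal, $\Ver$ essentially accepts only $x'=x$, which a reduction can check efficiently. A hybrid over blocks then finishes the argument. Note that here the OWSG key is $x$, never a 1PRS key.

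The paper reaches the same conclusion purely by citation: 1PRS give computationally hiding, statistically binding commitments (Theorem~1.1 of~\cite{morimae2022quantum}). These give EFI (Theorem~4.6 of~\cite{brakerski2022computational}), and EFI gives IV-OWSG (Theorem~4.3 of~\cite{malavolta2024exponential}).
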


\begin{proof}
    This follows by combining several existing results in the literature.
    First, Theorem 1.1 of~\cite{morimae2022quantum} shows that if 1PRS exist, then non-interactive quantum commitments with computational hiding and statistical binding exist.
    Note that non-interactive quantum commitments trivially imply interactive quantum commitments.
    Next, Theorem 4.6 of~\cite{brakerski2022computational} shows that if quantum commitments with computational hiding and statistical binding exist, then EFI exists.
    Finally, Theorem 4.3 of~\cite{malavolta2024exponential} shows that if EFI exists, then IV-OWSGs exist.
\end{proof}

\bibliographystyle{alpha}
\bibliography{refs}

@article{hiroka2024computational,
  title={Computational complexity of learning efficiently generatable pure states},
  author={Hiroka, Taiga and Hsieh, Min-Hsiu},
  journal={arXiv preprint arXiv:2410.04373},
  year={2024}
}

@inproceedings{batra2024commitments,
  title={Commitments are equivalent to statistically-verifiable one-way state generators},
  author={Batra, Rishabh and Jain, Rahul},
  booktitle={2024 IEEE 65th Annual Symposium on Foundations of Computer Science (FOCS)},
  pages={1178--1192},
  year={2024},
  organization={IEEE}
}

@article{morimae2022one,
  title={One-wayness in quantum cryptography},
  author={Morimae, Tomoyuki and Yamakawa, Takashi},
  journal={arXiv preprint arXiv:2210.03394},
  year={2022}
}

@inproceedings{morimae2022quantum,
  title={Quantum commitments and signatures without one-way functions},
  author={Morimae, Tomoyuki and Yamakawa, Takashi},
  booktitle={Annual International Cryptology Conference},
  pages={269--295},
  year={2022},
  organization={Springer}
}

@article{brakerski2022computational,
  title={On the computational hardness needed for quantum cryptography},
  author={Brakerski, Zvika and Canetti, Ran and Qian, Luowen},
  journal={arXiv preprint arXiv:2209.04101},
  year={2022}
}

@inproceedings{malavolta2024exponential,
  title={Exponential Quantum One-Wayness and EFI Pairs},
  author={Malavolta, Giulio and Morimae, Tomoyuki and Walter, Michael and Yamakawa, Takashi},
  booktitle={International Conference on Security and Cryptography for Networks},
  pages={121--138},
  year={2024},
  organization={Springer}
}

@inproceedings{buadescu2019quantum,
  title={Quantum state certification},
  author={B{\u{a}}descu, Costin and O'Donnell, Ryan and Wright, John},
  booktitle={Proceedings of the 51st Annual ACM SIGACT Symposium on Theory of Computing},
  pages={503--514},
  year={2019}
}

@article{fefferman2025hardness,
  title={The Hardness of Learning Quantum Circuits and its Cryptographic Applications},
  author={Fefferman, Bill and Ghosh, Soumik and Sinha, Makrand and Yuen, Henry},
  journal={arXiv preprint arXiv:2504.15343},
  year={2025}
}

@inproceedings{rivest1991cryptography,
  title={Cryptography and machine learning},
  author={Rivest, Ronald L},
  booktitle={International Conference on the Theory and Application of Cryptology},
  pages={427--439},
  year={1991},
  organization={Springer}
}

@article{kearns1994cryptographic,
  title={Cryptographic limitations on learning boolean formulae and finite automata},
  author={Kearns, Michael and Valiant, Leslie},
  journal={Journal of the ACM (JACM)},
  volume={41},
  number={1},
  pages={67--95},
  year={1994},
  publisher={ACM New York, NY, USA}
}

@article{klivans2009cryptographic,
  title={Cryptographic hardness for learning intersections of halfspaces},
  author={Klivans, Adam R and Sherstov, Alexander A},
  journal={Journal of Computer and System Sciences},
  volume={75},
  number={1},
  pages={2--12},
  year={2009},
  publisher={Elsevier}
}

@article{song2021cryptographic,
  title={On the cryptographic hardness of learning single periodic neurons},
  author={Song, Min Jae and Zadik, Ilias and Bruna, Joan},
  journal={Advances in neural information processing systems},
  volume={34},
  pages={29602--29615},
  year={2021}
}

@inproceedings{daniely2014average,
  title={From average case complexity to improper learning complexity},
  author={Daniely, Amit and Linial, Nati and Shalev-Shwartz, Shai},
  booktitle={Proceedings of the forty-sixth annual ACM symposium on Theory of computing},
  pages={441--448},
  year={2014}
}

@inproceedings{hirahara2023learning,
  title={Learning in pessiland via inductive inference},
  author={Hirahara, Shuichi and Nanashima, Mikito},
  booktitle={2023 IEEE 64th Annual Symposium on Foundations of Computer Science (FOCS)},
  pages={447--457},
  year={2023},
  organization={IEEE}
}

@inproceedings{naor2006learning,
  title={Learning to impersonate},
  author={Naor, Moni and Rothblum, Guy N},
  booktitle={Proceedings of the 23rd international conference on Machine learning},
  pages={649--656},
  year={2006}
}

@article{valiant1984theory,
  title={A theory of the learnable},
  author={Valiant, Leslie G},
  journal={Communications of the ACM},
  volume={27},
  number={11},
  pages={1134--1142},
  year={1984},
  publisher={ACM New York, NY, USA}
}

@inproceedings{naor2015bloom,
  title={Bloom filters in adversarial environments},
  author={Naor, Moni and Yogev, Eylon},
  booktitle={Annual Cryptology Conference},
  pages={565--584},
  year={2015},
  organization={Springer}
}

@inproceedings{impagliazzo1990no,
  title={No better ways to generate hard NP instances than picking uniformly at random},
  author={Impagliazzo, Russell and LA, Levin},
  booktitle={Proceedings [1990] 31st Annual Symposium on Foundations of Computer Science},
  pages={812--821},
  year={1990},
  organization={IEEE}
}

@article{oliveira2016conspiracies,
  title={Conspiracies between learning algorithms, circuit lower bounds and pseudorandomness},
  author={Oliveira, Igor C and Santhanam, Rahul},
  journal={arXiv preprint arXiv:1611.01190},
  year={2016}
}

@inproceedings{blum1993cryptographic,
  title={Cryptographic primitives based on hard learning problems},
  author={Blum, Avrim and Furst, Merrick and Kearns, Michael and Lipton, Richard J},
  booktitle={Annual international cryptology conference},
  pages={278--291},
  year={1993},
  organization={Springer}
}

@article{regev2009lattices,
  title={On lattices, learning with errors, random linear codes, and cryptography},
  author={Regev, Oded},
  journal={Journal of the ACM (JACM)},
  volume={56},
  number={6},
  pages={1--40},
  year={2009},
  publisher={ACM New York, NY, USA}
}

@incollection{micciancio2009lattice,
  title={Lattice-based cryptography},
  author={Micciancio, Daniele and Regev, Oded},
  booktitle={Post-quantum cryptography},
  pages={147--191},
  year={2009},
  publisher={Springer}
}

@inproceedings{gentry2008trapdoors,
  title={Trapdoors for hard lattices and new cryptographic constructions},
  author={Gentry, Craig and Peikert, Chris and Vaikuntanathan, Vinod},
  booktitle={Proceedings of the fortieth annual ACM symposium on Theory of computing},
  pages={197--206},
  year={2008}
}

@article{angluin1988queries,
  title={Queries and concept learning},
  author={Angluin, Dana},
  journal={Machine learning},
  volume={2},
  pages={319--342},
  year={1988},
  publisher={Springer}
}

@inproceedings{lyubashevsky2010ideal,
  title={On ideal lattices and learning with errors over rings},
  author={Lyubashevsky, Vadim and Peikert, Chris and Regev, Oded},
  booktitle={Advances in Cryptology--EUROCRYPT 2010: 29th Annual International Conference on the Theory and Applications of Cryptographic Techniques, French Riviera, May 30--June 3, 2010. Proceedings 29},
  pages={1--23},
  year={2010},
  organization={Springer}
}

@article{zhao2024learning,
  title={Learning quantum states and unitaries of bounded gate complexity},
  author={Zhao, Haimeng and Lewis, Laura and Kannan, Ishaan and Quek, Yihui and Huang, Hsin-Yuan and Caro, Matthias C},
  journal={PRX Quantum},
  volume={5},
  number={4},
  pages={040306},
  year={2024},
  publisher={APS}
}

@article{yang2023complexity,
  title={The Complexity of Learning (Pseudo) random Dynamics of Black Holes and Other Chaotic Systems},
  author={Yang, Lisa and Engelhardt, Netta},
  journal={arXiv preprint arXiv:2302.11013},
  year={2023}
}

@article{foxman2025random,
  title={Random unitaries in constant (quantum) time},
  author={Foxman, Ben and Parham, Natalie and Vasconcelos, Francisca and Yuen, Henry},
  journal={arXiv preprint arXiv:2508.11487},
  year={2025}
}

@article{niroula2026digital,
  title={Digital signatures with classical shadows on near-term quantum computers},
  author={Niroula, Pradeep and Liu, Minzhao and Omanakuttan, Sivaprasad and Amaro, David and Chakrabarti, Shouvanik and Ghosh, Soumik and He, Zichang and Jin, Yuwei and Kaleoglu, Fatih and Kordonowy, Steven and others},
  journal={arXiv preprint arXiv:2602.04859},
  year={2026}
}

@article{lewis2025computational,
  title={Computational complexity in quantum learning tasks},
  author={Lewis, Laura},
  journal={Edinburgh Research Archive, Informatics Thesis and Dissertation Collection},
  howpublished={\url{https://era.ed.ac.uk/items/40a54045-05d1-438a-9f05-c981bee5081a}},
  year={2025},
  publisher={The University of Edinburgh}
}

@inproceedings{buadescu2021improved,
  title={Improved quantum data analysis},
  author={B{\u{a}}descu, Costin and O'Donnell, Ryan},
  booktitle={Proceedings of the 53rd Annual ACM SIGACT Symposium on Theory of Computing},
  pages={1398--1411},
  year={2021}
}

@inproceedings{goldin2025translating,
  title={Translating between the common haar random state model and the unitary model},
  author={Goldin, Eli and Zhandry, Mark},
  booktitle={Annual International Cryptology Conference},
  pages={269--300},
  year={2025},
  organization={Springer}
}

@inproceedings{chen2025power,
  title={The power of a single haar random state: constructing and separating quantum pseudorandomness},
  author={Chen, Boyang and Coladangelo, Andrea and Sattath, Or},
  booktitle={Annual International Conference on the Theory and Applications of Cryptographic Techniques},
  pages={108--137},
  year={2025},
  organization={Springer}
}

@article{hiroka2025hardness,
  title={Hardness of Quantum Distribution Learning and Quantum Cryptography},
  author={Hiroka, Taiga and Hsieh, Min-Hsiu and Morimae, Tomoyuki},
  journal={arXiv preprint arXiv:2507.01292},
  year={2025}
}

@misc{morimae25collapsing,
      title={Quantum Cryptography and Hardness of Non-Collapsing Measurements}, 
      author={Tomoyuki Morimae and Yuki Shirakawa and Takashi Yamakawa},
      year={2025},
      eprint={2510.04448},
      archivePrefix={arXiv},
      primaryClass={quant-ph},
      url={https://arxiv.org/abs/2510.04448}, 
}

\end{document}